\documentclass{robtrans} 
\pdfoutput=1
\usepackage{proof}
\usepackage{amssymb}
\usepackage{amsmath}
\usepackage{stmaryrd}

\usepackage{tikz}

\input pdfcolor.tex 
\sloppy
% The following is enclosed to allow easy detection of differences in
% ascii coding.
% Upper-case    A B C D E F G H I J K L M N O P Q R S T U V W X Y Z
% Lower-case    a b c d e f g h i j k l m n o p q r s t u v w x y z
% Digits        0 1 2 3 4 5 6 7 8 9
% Exclamation   !           Double quote "          Hash (number) #
% Dollar        $           Percent      %          Ampersand     &
% Acute accent  '           Left paren   (          Right paren   )

% Asterisk      *           Plus         +          Comma         ,
% Minus         -           Point        .          Solidus       /
% Colon         :           Semicolon    ;          Less than     <
% Equals        =           Greater than >          Question mark ?
% At            @           Left bracket [          Backslash     \
% Right bracket ]           Circumflex   ^          Underscore    _
% Grave accent  `           Left brace   {          Vertical bar  |
% Right brace   }           Tilde        ~

% A couple of exemplary definitions:

\newtheorem{theorem}{Theorem}

\newcommand\notBox{\mbox{$\not\!\!\!\;\Box$}}
\newcommand\notDiamond{\mbox{$\not\!\!\!\,\,\Diamond$}}

\newcommand{\ndArr}{\vdash}

\DeclareMathOperator{\ndcpls}{\mbox{$\hspace{1.1pt}\vdash\!\!\!\!\!^\ast\hspace{2.1pt}$}}
\DeclareMathOperator{\ndcpl}{\mbox{$\hspace{1.1pt}\vdash\hspace{1.1pt}$}}
\DeclareMathOperator{\hilcplo}{\mbox{$\Vdash\hspace{-7.6pt}^\circ\hspace{1.3pt}$}}
\DeclareMathOperator{\hilcpls}{\mbox{$\Vdash\hspace{-7.6pt}^\ast\hspace{1.3pt}$}}
\DeclareMathOperator{\hilcpl}{\mbox{$\Vdash\hspace{1.1pt}$}}
\DeclareMathOperator{\seqcpls}{\mbox{$\hspace{1.1pt}\Rightarrow\hspace{-10.4pt}^\ast\hspace{4.1pt}$}}
\DeclareMathOperator{\seqcpl}{\mbox{$\hspace{1.1pt}\Rightarrow\hspace{1.1pt}$}}
\newcommand{\stable}{\mathit{stable}}

\newcommand{\rfoc}[2]{#1 \seqcpls \llbracket #2 \rrbracket}
\newcommand{\inv}[2]{#1 \seqcpls #2}
\newcommand{\lfoc}[3]{#1 \seqcpls #2 \gg #3}

\title{Constructive Provability Logic} 
\author{Robert J. Simmons \and Bernardo Toninho}

\begin{abstract} 
  We present {\it constructive provability logic}, an
  intuitionstic modal logic that validates the L\"ob rule of
  G\"odel and L\"ob's provability logic by permitting 
  logical reflection over provability. 
  Two distinct variants of this logic, {\bf CPL}
  and {\bf CPL*}, are presented in natural deduction and
  sequent calculus forms which are then shown to be equivalent.
  In addition, we discuss the
  use of constructive provability logic to justify stratified negation
  in logic programming within an intuitionstic and structural proof theory.

  All theorems presented in this paper are formalized in the Agda proof
  assistant.
  An earlier version of this work was presented at IMLA 2011
  [Simmons and Toninho 2011].
\end{abstract}

\category{F.4.1}{Theory of Computation}{Mathematical Logic}[Proof theory]
            
\terms{Algorithms, Design, Theory, Verification} 
            
\keywords{}

\begin{document}

\maketitle

\noindent
Consider the following propositions (where ``$\supset$'' represents
implication):
\begin{align*}
\forall x.\,\forall y.\,\mathsf{edge}(x,y) & \supset \mathsf{edge}(y,x)\\
\forall x.\,\forall y.\,\mathsf{edge}(x,y) & \supset \mathsf{path}(x,y)\\
\forall x.\,\forall y.\,\forall z.\,\mathsf{edge}(x,y) \supset \mathsf{path}(y,z) & \supset \mathsf{path}(x,z)
\end{align*}

One way to think of these propositions is as rules in a 
{\it bottom-up logic program}. This gives them an operational meaning:
given some known set of facts, a bottom-up logic program uses 
rules to derive more facts.
If we start with the single fact $\mathsf{edge(a,b)}$, we can derive 
$\mathsf{edge(b,a)}$ by using the first rule 
(taking $x = \mathsf{a}$ and $y = \mathsf{b}$), and then, using this new fact,
we can derive $\mathsf{path(b,a)}$ by using the second rule
(taking $x = \mathsf{b}$ and $y = \mathsf{a}$). Finally, from the original
$\mathsf{edge(a,b)}$ fact and the new $\mathsf{path(b,a)}$ fact, we can derive
$\mathsf{path(a,a)}$ using the third rule 
(taking $x = \mathsf{a}$, $y = \mathsf{b}$, and $z = \mathsf{a}$).
Once the only new facts we can derive are facts we already know, we say we 
have reached {\it saturation} --- this will happen in our example when
we have derived $\mathsf{edge(a,b)}$, $\mathsf{edge(b,a)}$, 
$\mathsf{path(a,b)}$, $\mathsf{path(b,a)}$, $\mathsf{path(a,a)}$, and 
$\mathsf{path(b,b)}$. Bottom-up logic programming 
is a very simple and intuitive kind of reasoning,
and it has also 
shown to be an elegant and powerful way of declaratively specifying 
and efficiently solving 
many computational problems, especially in the field of program analysis
(see \cite{whaley05using} for a number of references).

Next, consider the following proposition:
\begin{align*}
\forall x.\,\forall y.\,\mathsf{path}(x,y) \supset \neg \mathsf{edge}(x,y) & \supset \mathsf{noedge}(x,y)
\end{align*}
Intuition says that this is a meaningful statement. 
In our example above, we can derive
$\mathsf{path(a,a)}$, but we can't possibly derive $\mathsf{edge(a,a)}$,
so we should be able to conclude $\mathsf{noedge(a,a)}$. A bottom-up
logic programming semantics based on {\it stratified negation} verifies this 
intuition \cite{przymusinski88declarative}. 
In a stratified logic program made up of the 
four previous rules, we can derive all the 
consequences of the first three rules until saturation is reached. At this 
point, we know everything there is to know 
about facts of the form $\mathsf{edge}(X,Y)$ and $\mathsf{path}(X,Y)$. 
When considering the 
negated premise $\neg \mathsf{edge}(x,y)$ in the fourth rule, we simply check 
the saturated database and conclude that the premise holds if the fact does 
not appear in the database.

Stratified negation would, however, disallow the addition of the following 
rule as paradoxical or contradictory: 
\begin{align*}
\forall x.\,\forall y.\,\mathsf{path}(x,y) \supset \neg \mathsf{edge}(x,y) & \supset \mathsf{edge}(x,y)
\end{align*}
Why is this rule problematic? Operationally, the procedure we used 
for stratified negation no longer really makes sense: we reach
saturation, then conclude that there was no way to prove $\mathsf{edge(a,a)}$,
then use that conclusion to prove $\mathsf{edge(a,a)}$. But we had just
concluded that it wasn't provable! Stratified negation
ensures that we never use the fact that
there is no proof of $A$ to come up with a proof of $A$, either 
directly or indirectly. However, 
stratified negation is an odd property: the program consisting of the single
rule $\neg\mathsf{prop1} \supset \mathsf{prop2}$ is stratified (we consider 
$\mathsf{prop1}$ first, and then we consider $\mathsf{prop2}$), and the program
consisting of the single rule $\neg\mathsf{prop2} \supset \mathsf{prop1}$ is 
also stratified (we consider $\mathsf{prop2}$ first, and then we consider 
$\mathsf{prop1}$), but the two rules cannot be combined as a single stratified
logic program. 

In part due to this non-compositional nature, stratified negation
in logic programming has thus far eluded a treatment by the tools of
structural proof theory. Instead, justifications of negation
in logic programming have universally been of a classical nature based on
the assignment of truth values (Boolean, three-valued, or otherwise) 
to atomic propositions. In this paper, we take
a first step towards a structurally proof-theoretic justification of 
stratified negation in which computation is understood as
proof search for {\it uniform} 
(or {\it focused}) proofs \cite{miller91uniform,andreoli92logic}.
The logic that we present has strong ties to {\bf GL}, 
the G\"odel-L\"ob logic of provability 
\cite{verbrugge10provability},\footnote{{\bf GL} is also 
known variously in the literature as {\bf G}, {\bf L}, 
{\bf Pr}, {\bf PrL}, {\bf KW}, and {\bf K4W}.}
and
we therefore call it {\it constructive provability logic}. 
This connection in our intuitionistic setting  
was anticipated by Gabbay \shortcite{gabbay91modal}, who showed
that {\bf GL} was a natural choice for justifying negation in 
a classical, model-theoretic account of logic programming.

\subsection*{Outline}

Logic programming is our primary motivation, but this article
will mostly focus
on constructive provability logic {\it as a logic}. In Section~\ref{intro},
we develop the ideas behind constructive provability logic.
%using the judgmental metadology of Martin-L\"of \cite{MartinLof96} and 
%Pfenning \cite{Pfenning01mscs}. 
There are two natural variants of constructive
provability logic with different properties. The ``tethered'' variant of
constructive provability logic, {\bf CPL}, is discussed in 
Section~\ref{sec:cpl}.
The ``de-tethered'' variant of constructive provability logic, 
{\bf CPL*}, is discussed in Section~\ref{sec:cpl*}, 
and in Section~\ref{sec:logprog}
we sketch the use of {\bf CPL*} as a logic programming language.
% To lay the foundation for constructive provability logic, we present
% in Section~\ref{sec:judgmental} a brief introduction to the judgmental
% methodology that is employed in this work. In 
% Section~\ref{sec:reflect1} we present, as a warm-up, 
% a simplified use of definitional reflection
% in a modal logic. In Section~\ref{sec:reflect2} we discuss reflection
% over logical provability, giving rise to constructive provability
% logic.
% In Section~\ref{sec:cpl} we present two variants on constructive
% provability logic, using the latter variant in
% Section~\ref{sec:logprog} as the foundation for a logic programming
% language. 
In Section~\ref{sec:axiom} we consider the
relationship between this logic and classical Hilbert-style
presentations of provability logic, and we conclude in
Section~\ref{sec:conc}. 

In the course of this paper we will give both natural deduction and
sequent calculus presentations of {\bf CPL} and {\bf CPL*}, and show 
that, for each logic, the natural deduction and sequent calculus presentations
are equivalent at the level of provability. Natural deduction presentations
are the most typical way of thinking about proofs and their reductions.
% and
% their reductions, and the judgmental methodology traditionally operates within
%the realm of natural deduction proofs. 
Sequent calculus presentations, on
the other hand, are more useful for proving negative statements about the 
logic (i.e. that a certain fact is \emph{not} provable); such statements come
up frequently in the way we use constructive provability logic.

\section{A judgmental reconstruction of provability logic}\label{intro}

In this section we provide a very brief introduction to the
judgmental methodology that informs our 
development of constructive provability logic. Our presentation
is consistent with Pfenning and Davies' judgmental reconstruction of
modal logic \cite{pfenning01judgmental}, 
which in turn follows Martin L\"{o}f's 1983
Siena Lectures \cite{lof96meanings}.

The key concept behind the judgmental methodology is the separation
between propositions (written $A, B$, etc.) and judgments $J$. A
proposition is a syntactic object that is built up from \emph{atomic
  propositions} using propositional connectives such as implication and
conjunction. Judgments are proved through rules of inference. Thus, we can
talk about proving the judgment $A\;\mathit{true}$ or the judgment
$A\;\mathit{false}$. It is not
meaningful to talk about ``proving $A$'' except as a shorthand way of 
talking about proving the judgment $A\;\mathit{true}$.

When proving a particular judgment, one should be able to
reason from hypotheses. To this end, the concept of an
\emph{hypothetical} judgment, written $J_1 , \dots , J_n \ndArr J$,
comes into play. The conventional interpretation of such a hypothetical
judgment is that $J$ has a proof under the assumptions that $J_1$
through $J_n$ also have proofs. However, the meaning of a
hypothetical judgment is not given to us \emph{a priori}. Rather, we 
\emph{define} the meaning of a hypothetical judgment by defining 
(1) a \emph{hypothesis} principle, (2) a \emph{generalized weakening}
principle, and a (3) \emph{substitution} principle. These principles
arise from the understanding of what a given hypothetical judgment
should mean. The hypothesis principle defines how hypothetical assumptions
are used. The generalized weakening principle defines
primitive operations on hypothetical 
assumptions that do not change the meaning of
a judgment (e.g. ``the order in which we write assumptions does not
matter'', ``all assumptions need not be used in a proof'').
Finally, the substitution
principle defines the conditions under which reasoning through lemmas
is justified. 

Plain-vanilla intuitionistic logic is one of the so-called
{\it structural logics}, and as a structural logic its defining principles
are simple and standard:

\subsection*{\it Defining principles of plain-vanilla intuitionistic logic:}
\begin{itemize}
\item {\it Hypothesis principle}: If $A~\mathit{true} \in \Psi$, then $\Psi \ndArr A~\mathit{true}$.
\item {\it Generalized weakening principle}: If $\Psi \subseteq \Psi'$ and $\Psi \ndArr A~\mathit{true}$, then $\Psi' \ndArr A~\mathit{true}$.
\item {\it Substitution principle}: If $\Psi \ndArr A~\mathit{true}$ and $\Psi, A~\mathit{true} \ndArr C~\mathit{true}$, then $\Psi \ndArr C~\mathit{true}$.
\end{itemize}

These principles have an interesting
character. While they are, in some sense, the last thing we need
to consider when defining a logic (i.e. after defining the logic, they
are theorems we need to prove about the system), the judgmental
methodology tells us that these principles are also the {\it first} things
that need to be considered. 
Philosophically, this arises from the fact that these principles flow
from our understanding of the meaning of the hypothetical judgment.
More pragmatically, generalized 
weakening and substitution are necessary as we perform
sanity checks on the rules that define individual connectives.

\subsection{Natural deduction in the judgmental methodology}

The judgmental methodology is generally played out in the setting of
natural deduction. In natural deduction, the meaning of a
logical connective is given by two sets of rules: the
\emph{introduction} rules, stating how we can come to know (that is, prove) 
of the truth of
that connective, and the \emph{elimination} rules, defining
how we can use the knowledge (that is, the proof) 
of that proposition's truth. For
instance, implication $A \supset B$ is defined by one
introduction rule ${\supset}I$ and one elimination rule ${\supset}E$:
$$
\infer[{\supset}I]{\Psi \ndArr A \supset B \;true}
{\Psi , A\;true \ndArr B\;true}
\qquad
\infer[{\supset}E]{\Psi \ndArr B\;true}
{\Psi \ndArr A\supset B\;true & \Psi \ndArr A\;true}
$$
In natural deduction, the 
sanity checks that we perform on a definition like this are called
\emph{local soundness} and \emph{local completeness}. Local soundness
ensures that the introduction rules are strong enough with respect to the
elimination rules, whereas local completeness ensures that the
introduction rules are not too strong with respect to the elimination rules. 

\paragraph*{Local soundness}

Consider a proof $\mathcal D$ of the judgment $\Psi \ndArr C\;true$ where
the last rule is an elimination rule (in the case for implication, the
elimination rule is ${\supset} E$ and so we have two subproofs, one of
$\Psi \ndArr A \supset C\; true$ -- call it $\mathcal D_1$ -- 
and another of $\Psi \ndArr A\;true$ -- call it $\mathcal D_2$). 
Since the rule is an elimination
rule, it is necessarily the case that one of the subproofs mentions
the relevant connective (in the case for implication, the
first sub-proof $\mathcal D_1$
mentions the connective). 
Local soundness is the property that, if the last rule in the 
connective-mentioning premise is an introduction rule, 
then both the introduction
rule and the elimination are unnecessary. To show this,
we build a
proof of $\Psi \vdash C\;true$ using only the premises of the
introduction rule and any other premises of the elimination rule. In
our example with implication, we can obtain this new proof by
appealing to the substitution principle for the subproofs labeled
$\mathcal D_2$ and $\mathcal D_1'$:
\[
\begin{array}{c}
\infer[{\supset}E]
{\Psi \ndArr C~{\it true}\mathstrut}
{\infer[{\supset}I]{\Psi \ndArr A \supset C~{\it true}\mathstrut}
{\deduce{\Psi, u : A~{\it true} \ndArr C~{\it true}\mathstrut}{\mathcal D'_1 \mathstrut}}
 & \deduce{\Psi \ndArr A~{\it true}\mathstrut}{\mathcal D_2 \mathstrut}\mathstrut}
\end{array}
\quad
\Longrightarrow_R
\quad
\begin{array}{c}
\deduce{\Psi \vdash C~{\it true} \mathstrut}{[\mathcal D_2/u] \mathcal D_1' \mathstrut}
\end{array}
\]
Note that, 
following standard conventions, we gave the label
$u$ to the premise $A\;\mathit{true}$ 
in the hypothetical judgment to make it clear what we were 
substituting for.

\paragraph*{Local completeness}
Where local soundness is witnessed by a proof reduction, local
completeness is witnessed by a proof expansion: given an arbitrary
proof of the truth of connective we are interested
in, we show that by applying the
elimination rules and then applying the introduction rules we can
reconstruct the initial proof. In the expansion below, we obtain
$\mathcal D'$ by applying the generalized
weakening principle to the given proof
$\mathcal D$:
\[
\begin{array}{c}
\deduce{\Psi \vdash A \supset B~{\it true} \mathstrut}{\mathcal D \mathstrut}
\end{array}
\quad
\Longrightarrow_E
\quad
\begin{array}{c}
\infer[{\supset}I]{\Psi \vdash A \supset B~{\it true} \mathstrut}
{\infer[{\supset}E]{\Psi, A~{\it true} \vdash B~{\it true} \mathstrut}
 {\deduce{\Psi, A~{\it true} \vdash A \supset B~{\it true} \mathstrut}
  {\mathcal D' \mathstrut}
  & 
  \infer[\it hyp]{\Psi, A~{\it true} \vdash A~{\it true}\mathstrut}{}
}}
\end{array}
\]
We also used the hypothesis principle in the above example: in natural
deduction systems, the hypothesis 
principle always holds trivially due the presence of the 
rule we labeled {\it hyp}.

\subsection{Reflection over an accessibility relation}\label{sec:reflect1}

Having reviewed the judgmental methodology, we will now perform a sort
of warm-up exercise to introduce the idea of definitional reflection in the 
presentation of a logic \cite{schroederheister93rules}. This warm-up logic,
which we name {\bf DML}
(for ``{\bf D}efinitional {\bf M}odal {\bf L}ogic''), is
recognizably similar to {\bf IK}, 
the intuitionistic Kripke semantics for modal logic 
presented by Simpson \shortcite{simpson94proof}.

Kripke semantics for modal logic are characterized by {\it worlds} and an 
{\it accessibility relation} that describes the relationship between worlds.
We will use as a running example an accessibility relation
with three worlds, $\alpha$, $\beta$, and $\gamma$, such that
$\alpha \prec \beta$ (we say ``$\beta$ is accessible from $\alpha$''), 
$\alpha \prec \gamma$, and $\beta \prec \gamma$. 

\begin{center}
%\beginpgfgraphicnamed{graphic}
\begin{tikzpicture} 
\draw  (-5cm,0) node{$\beta$};
\draw (-2cm,0) node{$\gamma$};
\draw (-8cm,-.4cm) node{$\alpha$};
\pgfsetarrowsstart{latex} 
\pgfsetlinewidth{.3pt} 
\pgfusepath{stroke} 
\draw[thick] (-2.2cm,0cm) -- (-4.8cm,0cm); 
\draw[thick,rounded corners=8pt] (-2.2cm,0cm) -- (-2.7cm,0cm) -- (-4.5cm,-.4cm) -- (-7.8cm,-.4cm);
\draw[thick,rounded corners=8pt] (-5.2cm,0cm) -- (-6cm,0cm) -- (-7.8cm,-.4cm);
\pgfsetlinewidth{.3pt} 
%\useasboundingbox (-0.25,-0.25) rectangle (3.75,2.25); 
\end{tikzpicture} 
%\endpgfgraphicnamed
\end{center}

% \noindent

The proof theory of {\bf DML} is 
parametrized over an arbitrary accessibility relation; the
three-world accessibility relation above is only one possible example.
The hypothetical judgment for this logic takes the form
$A_1[w_1],\ldots,A_n[w_n] \ndArr C[w]$, where $C$ and
the $A_i$ are propositions and $w$ and the $w_i$
are worlds. {\bf DML} is also a structural logic, so its judgmental
principles are straightforward:

\subsection*{\it Defining principles of {\bf DML}:}
\begin{itemize}
\item {\it Hypothesis principle}: If $A[w] \in \Gamma$, then $\Gamma \ndArr A[w]$.
\item {\it Generalized weakening principle}: If $\Gamma \subseteq \Gamma'$ and $\Gamma \ndArr A[w]$, then $\Gamma' \ndArr A[w]$.
\item {\it Substitution principle}: If $\Gamma \ndArr A[w]$ and $\Gamma, A[w] \ndArr C[w']$, then $\Gamma \ndArr C[w']$.
\end{itemize}

In {\bf DML}, as in Simpson's {\bf IK}, worlds and 
the accessibility relation are critical to the
definition of the modal operators. Consider the definition of modal
possibility, $\Diamond A$. The Kripke interpretation of modal possibility
is that $\Diamond A$ is
true at world $w$ if there exists some accessible world $w'$ where $A$ is true.
The introduction rule for modal possibility directly reflects this 
interpretation:
\[
\infer[\Diamond I]
{\Gamma \ndArr \Diamond A[w] \mathstrut}
{w \prec w' & \Gamma \ndArr A [w'] \mathstrut}
\]
\noindent
The elimination rule for modal possibility is where the use of definitional
reflection becomes important. If we can prove that $\Diamond A$ is
true at the world $w$, we can use case analysis over the 
pre-defined accessibility relation
to look up all the worlds $w'$ such that $w \prec w'$ holds; for each
such $w'$, we must prove the ultimate conclusion using the additional
hypothesis $A[w']$. This is expressed by the following inference rule:
\[\infer[\Diamond E]
{\Gamma \ndArr C [w''] \mathstrut}
{\Gamma \ndArr \Diamond A[w] & 
\forall w'.\, w \prec w' \longrightarrow \Gamma, A [w'] \ndArr C [w''] 
\mathstrut}
\]

In our aforementioned example, there are two worlds $w'$ 
such that $\alpha \prec w'$ holds. Therefore, to eliminate
a proof of $\Diamond A[\alpha]$, 
we must consider the case where $A$ holds at world
$\beta$ and the case where $A$ holds at world $\gamma$. Similarly,
because there are zero worlds $w'$ such that $\gamma \prec w'$ holds,
a proof of $\Diamond A[\gamma]$ is contradictory
and can be used to prove anything at all. These two 
derivable special cases of the possibility elimination
rule can be written as follows:
\[
\infer[\Diamond E_\alpha]
{\Gamma \ndArr C[w''] \mathstrut}
{\Gamma \ndArr \Diamond A[\alpha] \mathstrut
& \Gamma, A[\beta] \ndArr C[w'']
& \Gamma, A[\gamma] \ndArr C[w'']}
\qquad
\infer[\Diamond E_\gamma]
{\Gamma \ndArr C[w''] \mathstrut}
{\Gamma \ndArr \Diamond A[\gamma] \mathstrut}
\]
This elimination rule is what makes {\bf DML} 
strikingly different, and seemingly stronger, than Simpson's {\bf IK}.
In {\bf IK}, it would not be possible to prove
$\cdot \vdash \Diamond A \supset \bot [\gamma]$, but in {\bf DML} this
is a simple use of the ${\supset}I$ and ${\Diamond}E_\gamma$ rules. This 
strength comes at a price, of course. Any reasoning in {\bf IK} is valid
in a larger accessibility relation, but in {\bf DML}, the 
aforementioned hypothetical judgment
$\cdot \vdash \Diamond A \supset \bot [\gamma]$ would no longer be 
valid if the accessibility relation was made larger in certain
ways (for example, by making $\gamma$ accessible from itself). 

It is possible, at least in this simple case, to see $\Diamond E$ as merely
a rule schema that, once given an accessibility
relation, stamps out an appropriate number of rules.
However, as suggested by Zeilberger \shortcite{zeilberger08focusing}, 
it is more auspicious
to take this higher-order formulation of definitional reflection
at face value: the second premise of the $\Diamond E$ rule is actually 
a (meta-level) mapping -- a function -- 
from facts about the accessibility relation to 
derivations. This interpretation becomes relevant when we discuss local
soundness.

To show local soundness, we use functional application to discharge the 
higher-order premises, so that $(\mathcal D_2 \,w'\,\mathcal A_1)$ below is
a derivation of the hypothetical judgment $\Gamma, u : A[w'] \vdash C[w'']$. 
\[
\begin{array}{c}
\infer[{\Diamond}E]
{\Gamma \ndArr C[w'']\mathstrut}
{\infer[{\Diamond}I]{\Gamma \ndArr \Diamond A [w']\mathstrut}
{\deduce{w \prec w' \mathstrut}{\mathcal A_1 \mathstrut}
 & 
 \deduce{\Gamma \ndArr A[w']\mathstrut}{\mathcal D'_1 \mathstrut}}
 & \deduce{\forall{w^*}. w \prec w^* \longrightarrow \Gamma, u : A[w^*] \vdash C[w'']}{\mathcal D_2 \mathstrut \qquad\quad~}\mathstrut}
\end{array}
\qquad\qquad\qquad
\]\vspace{-10pt}\[
\qquad\qquad\qquad\qquad\qquad\qquad\qquad\qquad\qquad\qquad\qquad\qquad
\Longrightarrow_R
\quad
\begin{array}{c}
\deduce{\Gamma \vdash C[w''] \mathstrut}{[\mathcal D'_1/u](\mathcal D_2 \,w'\,\mathcal A_1) \mathstrut}
\end{array}
\]

Local completeness is a bit difficult to write clearly in the traditional 
two-dimensional notation used for proofs. It begins like this:
\[
\begin{array}{c}
\deduce{\Gamma \ndcpl \Diamond A[w] \mathstrut}{\mathcal D \mathstrut}
\end{array}
\quad
\Longrightarrow_E
\quad
\begin{array}{c}
\infer[\Diamond E]{\Gamma \ndcpl \Diamond A[w]}
{\deduce{\Gamma \ndcpl \Diamond A[w]}{\mathcal D}
& 
\deduce{\forall w'. w \prec w' \longrightarrow \Gamma, A[w'] \vdash \Diamond A [w]}{???\qquad~~}}
\end{array}
\]
We discharge the remaining proof obligation marked $???$ above with
a lemma: we must prove that for all $w'$, $w \prec w'$ implies 
$\Gamma, A[w'] \vdash \Diamond A[w]$. If we label the given premise
$w \prec w'$ as $\mathcal A$, this fact is be established 
by the following schematic derivation:
\[
\infer[{\Diamond}I]
{\Gamma, A[w'] \vdash \Diamond A [w] \mathstrut}
{\deduce{w \prec w' }{\mathcal A \mathstrut}
&
\infer[\it hyp]
{\Gamma, A[w'] \vdash A [w'] \mathstrut}
{}
\mathstrut}
\]
This proves our lemma, which in turn
suffices to show local completeness for modal possibility, 
ending our discussion
of the system {\bf DML}.

\subsection{Reflection over provability}\label{sec:reflect2}

The system {\bf DML} was just a warm-up that introduced
reflection over the definition of an accessibility relation. We will now 
introduce constructive provability logic by additionally
using reflection over provability. 
In {\bf DML}, a proof of 
$\Diamond A [w]$ allows us to assume (by the addition
of a new hypothetical assumption) that $A$ is true at one of the worlds
$w'$ accessible from $w$; if there is no such world $w'$, the
assumption is contradictory. In constructive provability logic, 
a proof of $\Diamond A [w]$ will allow us to assume 
that $A$ is {\it provable given the current set of hypotheses} 
at one of the worlds
$w'$ accessible from $w$. If $A$ is not currently provable at some world
$w'$ accessible from $w$, the assumption is contradictory.

As a specific example, if $Q$ is an arbitrary atomic
proposition, $\bot$ is the proposition representing falsehood, 
and we use the accessibility relation from the previous section, then 
in constructive provability logic we can prove 
$\Diamond Q [ \alpha ] \ndArr \bot [ \alpha ]$ by the use of
reflection over logical provability.
It is possible to show, using techniques that we will introduce later,
that there is {\it no} proof 
of $\Diamond Q [ \alpha ] \ndArr Q [ \beta ]$ and 
{\it no} proof of 
$\Diamond Q [ \alpha ] \ndArr Q [ \gamma ]$.
This, in turn, allows us to conclude that asserting
that $Q$ is currently provable at one of the worlds $w'$ accessible
from $\alpha$ is contradictory.
The same judgment 
$\Diamond Q [ \alpha ] \ndArr \bot [ \alpha ]$ would {\it not} have been
provable in {\bf DML}. In order to use a proof of
$\Diamond Q [ \alpha ]$ in {\bf DML}, we would have to prove
both $\Diamond Q [ \alpha ], Q [ \beta ] \ndArr \bot [ \alpha ]$ and
$\Diamond Q [ \alpha ], Q [ \gamma ] \ndArr \bot [ \alpha ]$, and
neither of these hypothetical judgments are, in fact, provable.

\subsubsection{The weakening principle for constructive provability logic}

The discussion above is enough to make it clear 
that the generalized weakening principle from {\bf DML}
will not be acceptable for constructive provability logic. 
In {\bf DML}, the weakening principle
asserts that, if we can prove $\Gamma \vdash \bot[\alpha]$, then we can always
also prove $\Gamma, Q [ \beta ] \vdash \bot [ \alpha ]$. Compare this
to the previous discussion where we counted on there being no proof of
the hypothetical judgment $\Diamond Q [ \alpha ] \ndArr Q [ \beta ]$. 
If we weaken the context with the additional judgment $Q [ \beta ]$, we 
get a hypothetical judgment 
$\Diamond Q [ \alpha ], Q [\beta] \ndArr Q [ \beta ]$ that {\it is}
provable, invalidating our reasoning.

This illustrates that constructive provability logic 
must avoid some forms of weakening. 
To this end, we define a new partial order on contexts
that is indexed by a world $w$,
written as $\Gamma \subseteq_w \Gamma'$. This relation holds exactly when:
\begin{itemize}
\item For all $w'$ such that $w \prec^* w'$, $A [ w' ] \in \Gamma$ implies
$A [ w' ] \in \Gamma'$, and
\item For all $w'$ such that $w \prec^+ w'$, $A [ w' ] \in \Gamma'$ implies
$A [ w' ] \in \Gamma$. 
\end{itemize}
Here, $w \prec^* w'$ is the reflexive and 
transitive closure of the accessibility
relation and $w \prec^+ w'$ is the transitive closure of the accessibility
relation. This indexed subset relation $\subseteq_w$
acts like the normal subset relation when
dealing with judgments $A[w]$, but for assumptions at worlds $A[w']$
where $w'$ is transitively accessible
from $w$, only contraction and exchange are allowed. Assumptions $A[w']$
where $w'$ is neither equal to $w$ nor transitively accessible from $w$ are 
completely unconstrained and can be added or removed without restriction.

With our new partial order, we can present two of the defining principles
of constructive provability logic. 

\subsection*{\it Partial defining principles of constructive provability logic:}
\begin{itemize}
\item {\it Hypothesis principle}: If $A[w] \in \Gamma$, then $\Gamma \ndArr A[w]$.
\item {\it Generalized weakening principle}: If $\Gamma \subseteq_w \Gamma'$ and $\Gamma \ndArr A[w]$, then $\Gamma' \ndArr A[w]$.
\end{itemize}

\noindent
We omit the substitution principle for now, because it is different in the
two different variants of constructive provability logic that we present
in this paper.

\subsubsection{Restrictions on accessibility relations and the form of rules}

Reflection over provability must be done with care. It would be logically
inconsistent to modify our previous elimination
rule for modal possibility by turning the hypothesis $A[w']$ into 
a higher-order assumption $\Gamma \ndArr A[w']$ like this:

\[\infer[\Diamond E_{\it bad}]
{\Gamma \ndArr C [w''] \mathstrut}
{\Gamma \ndArr \Diamond A [w] & 
\forall w'.\, w \prec w' 
\longrightarrow \Gamma \ndArr A [w'] \longrightarrow \Gamma \ndArr C [w''] 
\mathstrut}
\]
This definition can lead to logical inconsistency 
because the hypothetical judgment
$\Gamma \ndArr A[w']$ occurs to the left of an arrow 
in a rule that is ostensibly defining the hypothetical judgment. 
In {\bf DML} this
was no issue: we stipulated that the accessibility relation was 
definable independently
from the hypothetical judgment. 

To make the definition of constructive
provability logic well-formed, we take the position that the hypothetical 
judgment $\Gamma \ndArr A[w]$ is defined one world at a time. If we then
restrict the accessibility relation so that it is {\it converse well-founded}
(irreflexive, no cycles or infinite ascending chains), 
when $w \prec w'$, then we can hope to define
$\Gamma \ndArr A [ w' ]$ before 
$\Gamma \ndArr A [ w ]$ in the same way we defined 
the accessibility relation
$w \prec w'$ before $\Gamma \ndArr A [ w ]$ in {\bf DML}.

If we are trying to define provability one world at a time, the
problem with $\Diamond E_{\it bad}$ is the relationship
(or lack thereof) between 
$\Gamma \ndArr A [ w' ]$, which we are reflecting over, and 
$\Gamma \ndArr C [ w'' ]$, which we are defining. To fix this, we must
ensure that $w'$ is accessible from $w''$ in one or more steps, and therefore
defined before $w''$. There are two obvious ways to do this,
which give rise to the two variants of constructive provability logic,
{\bf CPL} and {\bf CPL*}.

\subsubsection{Tethered constructive provability logic}\label{sec:previewcpl}

 Because $w \prec w'$,
the simplest solution is to force $w$ to be equal to $w''$; this results in
the following ``tethered'' (in the sense that the world in the premise 
$\Diamond A[w]$ is tethered to the conclusion $C[w]$) rule for modal possibility: 
\[\infer[\Diamond E_{\bf CPL}]
{\Gamma \ndArr C [w] \mathstrut}
{\Gamma \ndArr \Diamond A [w]& 
\forall w'.\, w \prec w' 
\longrightarrow \Gamma \ndArr A [w'] 
\longrightarrow \Gamma \ndArr C [w] 
\mathstrut}
\]
We call this tethered version of constructive provability logic {\bf CPL},
and show the rules for modal necessity to be
locally sound and complete in Section~\ref{sec:cpl}.

\subsubsection{De-tethered constructive provability logic}\label{sec:previewcpls}

The tethered proof theory of {\bf CPL}
can be viewed as unnecessarily restrictive. 
To fix the inconsistent left rule $\Diamond E_{\it bad}$, all that is really
necessary according to the discussion above is for 
provability at $w'$ to be defined before provability at $w''$. 
We can ``de-tether'' the logic somewhat by allowing both the
case where
$w$ is the same as $w''$ and the case where $w$ is transitively accessible
from $w''$ (this is achieved by adding a premise $w'' \prec^* w$). This
is sufficient to ensure that $w'$ will be transitively accessible from $w''$
($w'' \prec^+ w'$), ensuring that provability at $w'$ will be defined
before provability at $w''$ as required. The de-tethered
elimination rule for modal possibility in constructive provability
logic looks like this:
\[\infer[\Diamond E_{\bf CPL*}]
{\Gamma \ndcpls C [w''] \mathstrut}
{w'' \prec^* w &
\Gamma \ndcpls \Diamond A [w] & 
\forall w'.\, w \prec w' 
\longrightarrow \Gamma \ndcpls A [w'] \longrightarrow \Gamma \ndcpls C [w''] 
\mathstrut}
\]
We call the de-tethered variant of constructive provability
logic {\bf CPL*}. To distinguish the two similar logics, in the
subsequent discussion we will
write the hypothetical judgment for {\bf CPL} as 
$\Gamma \ndcpl A[w]$ and write the hypothetical judgment for
{\bf CPL*} as $\Gamma \ndcpls A[w]$.

\subsection{A note on formalization}\label{sec:note}

Both variants of 
constructive provability logic and their 
metatheory have been formalized in the
Agda proof assistant, an implementation of the constructive
type theory of Martin L\"of \cite{norell08towards}. This development is 
available from
\url{https://github.com/robsimmons/agda-lib/tree/cpl}.

With two exceptions, all of the results in this paper are fully verified by
Agda. The most significant exception is that Agda cannot verify that rules
such as $\Diamond E_{\bf CPL}$ and
$\Diamond E_{\bf CPL*}$ above avoid logical inconsistency.
This is because Agda's positivity checker, which ensures that data-types
are not self-referential, does not understand the critical relationship
between the logical rules and the 
converse well-founded accessibility relation. The result is that 
the positivity checker must be disabled when we encode the
definitions of {\bf CPL} and {\bf CPL*}. This issue
is discussed further in the technical report along with
potential resolutions \cite{simmons10principles}. One key
point is that any finite accessibility relation can be instantiated
without running afoul of the positivity issue, so we can restrict any concerns 
to instantiations of constructive provability logic with infinite converse
well-founded accessibility relations.

The second issue is that, 
% exception is arguably due to a lack of expressiveness in Agda. It is
% not possible to specify induction metrics for Agda proofs; rather,
% Agda must be able to synthesize a induction metric and then check its
% correctness before it will report that a proof is correct. 
due to the complexity of the
de-tethered cut admissibility proof, 
Agda runs out of memory and crashes 
when attempting to verify that this proof terminates. Therefore,
we must turn off the termination checker when dealing with this proof.
Arguably, this shortcoming is due
to the fact that Agda does not allow the user to specify an induction
metric -- rather, it synthesizes all possible induction metrics
and then checks them. However, we can state an induction metric and verify
by hand that this induction metric is obeyed in the proof.
%Re-factoring or improvements to Agda will eventually
%make checking this proof possible.
%% RJS - I really wish we had space to add "But we do think that this theorem
%% is *true*; it's a standard cut admissibility proof augmented with some 
%% mind-numbing bookkeeping.

\section{{\bf CPL}, Tethered constructive provability logic}\label{sec:cpl}

In this section, we will present the defining principles, natural
deduction, and sequent calculus for the tethered variant of 
constructive provability logic, {\bf CPL}. Mirroring the tethered presentation
of rules outlined in 
Section~\ref{sec:previewcpl}, the substitution principle in
{\bf CPL} is tethered: the hypothesis being discharged, $A[w]$, is at the
same world as the consequent $C[w]$.

\subsection*{\it Defining principles of {\bf CPL}:}
\begin{itemize}
\item {\it Hypothesis principle}: If $A[w] \in \Gamma$, then $\Gamma \ndcpl A[w]$.
\item {\it Generalized weakening principle}: If $\Gamma \subseteq_w \Gamma'$ and $\Gamma \ndcpl A[w]$, then $\Gamma' \ndcpl A[w]$.
\item {\it Substitution principle}: If $\Gamma \ndcpl A[w]$ and 
 $\Gamma, A[w] \ndcpl C[w]$, then $\Gamma \ndArr C[w]$.
\end{itemize}

\begin{figure}[tp]
\[
\infer[{\it hyp}]{\Gamma,
  A[w]  \ndcpl A[w] }{}
 \quad
\infer[\bot{E}]{\Gamma \ndcpl C[w]}{\Gamma \ndcpl \bot[w] }
\]
\[
\infer[{\supset}I]{\Gamma \ndcpl A\supset B[w] }{\Gamma, A[w]
  \ndcpl B[w]} \quad 
\infer[{\supset}E]{\Gamma \ndcpl B[w] }{\Gamma \ndcpl A\supset B[w]
  & \Gamma \ndcpl A[w]} 
\]
\[
\infer[{\Diamond}I]{\Gamma \ndcpl \Diamond A[w]}{w\prec w' \quad \Gamma
\ndcpl A[w']} \quad
\infer[{\Box}I]{\Gamma \ndcpl \Box A [w]}{\forall w'.\, w\prec w'
  \longrightarrow \Gamma \ndcpl A[w']  } 
\]
\[
\infer[{\Diamond}E]{\Gamma \ndcpl C[w]}
{\Gamma \ndcpl \Diamond A[w] & \forall
  w'.\, w\prec w' \longrightarrow \Gamma \ndcpl A[w'] \longrightarrow
  \Gamma  \ndcpl C[w]}
\]
\[
\infer[{\Box}{E}]{\Gamma\ndcpl C[w]}{ \Gamma \ndcpl \Box A[w]  & (\forall
  w'.\, w\prec w' \longrightarrow \Gamma\ndcpl A[w']) \longrightarrow 
 \Gamma \ndcpl C[w]}
\]
%\vspace{-0.25in}
\caption{Intuitionistic {\bf CPL} natural deduction}
\label{fig:teth_nd}
\end{figure}
The natural deduction rules for {\bf CPL} are presented in Fig.~\ref{fig:teth_nd}.
Implication, atomic propositions and falsehood are defined as per usual in
natural deduction presentations of logic. 
The introduction rule for modal possibility is visually the same as the
rule from {\bf DML}, and the elimination rule was presented in 
Section~\ref{sec:previewcpl}, but we have yet
to show these rules locally sound and complete. Local soundness is 
witnessed by the following reduction; as in the local soundness
proof for possibility in {\bf DML},
the higher-order proof $\mathcal D_3$ is used as
a function -- we apply it to $w'$, $\mathcal A_1$, and $\mathcal D_2$ in
order to obtain the necessary proof:
\[
\begin{array}{c}
\infer[{\Diamond}E]
{\Gamma \ndArr C[w]\mathstrut}
{\infer[{\Diamond}I]{\Gamma \ndcpl \Diamond A[w]\mathstrut}
{\deduce{w\prec w'\mathstrut}{\mathcal A_1 \mathstrut} & \deduce{\Gamma
    \ndcpl A[w']\mathstrut}{\mathcal D_2}}
 & \deduce{\forall w'. w \prec w' \longrightarrow \Gamma \ndcpl A[w']
   \longrightarrow \Gamma \ndcpl C[w]}{\mathcal D_3 \mathstrut}\mathstrut}
\end{array}
\qquad\qquad
\]
\vspace{-10pt}
\[
\qquad\qquad\qquad\qquad\qquad\qquad\qquad\qquad\qquad\qquad\qquad\qquad\qquad
\quad
\Longrightarrow_R
\quad
\begin{array}{c}
\deduce{\Gamma \ndcpl C[w] \mathstrut}{\mathcal D_3\,w'\,\mathcal
  A_1\,\mathcal D_2\mathstrut}
\end{array}
\]
Local completeness also holds for modal possibility, although the
expansion that witnesses the property is somewhat surprising:
\[
\begin{array}{c}
\deduce{\Gamma \ndcpl \Diamond A[w] \mathstrut}{\mathcal D \mathstrut}
\end{array}
\quad
\Longrightarrow_E
\quad
\begin{array}{c}
\infer[\Diamond E]{\Gamma \ndcpl \Diamond A[w]}
{\deduce{\Gamma \ndcpl \Diamond A[w]}{\mathcal D}
& \Diamond I}
\end{array}
\]
We expand a proof of $\Diamond A[w]$ by applying $\Diamond E$ to the
given derivation \emph{and} to the actual rule of $\Diamond I$. The
higher-order premise for $\Diamond E$ for this proof requires us to
prove the following meta-theorem: ``If $w\prec w'$ and $\Gamma \ndcpl
A[w']$ then $\Gamma \ndcpl \Diamond A[w]$.'' This theorem is
immediately true by application of the $\Diamond I$ rule to the assumptions.

All that remains is a discussion of modal necessity.
Whereas modal possibility
has an existential character (there 
{\it exists} some accessible world where $A$ is true), modal necessity
has a universal character (at {\it every} accessible world, $A$ is true).
We conclude $\Box A$ at world $w$ if we can show
that for all worlds $w'$ that are
accessible from $w$, $A$ is provable at $w'$; this is reflected in the
$\Box I$ rule.

The universal character of modal necessity would suggest
that we can {\it use} a proof of $\Box A [ w ]$ by exhibiting a 
world $w'$ accessible from $w$ and then assuming that $A$ was provable
there. 
\[
\infer[\Box{E}']
{\Gamma \ndcpl C[w]}
{\Gamma \ndcpl \Box A [ w ]  & \quad w \prec w' & \quad 
\Gamma \ndcpl A[w'] \longrightarrow \Gamma  \ndcpl C[w]}
\]
\noindent
Surprisingly, this rule is locally sound but 
not locally complete in the presence of
potentially infinite accessibility relations (consider an infinitely
branching accessibility relation -- this would require infinite
applications of $\Box E'$ in order to obtain enough to information to
re-apply $\Box I$), so {\bf CPL} uses
a less intuitive third-order formulation of $\Box E$ shown in
Fig.~\ref{fig:teth_nd}. The more intuitive rule is nevertheless derivable
from the actual $\Box E$ rule, and the third-order formulation of the 
rule is derivable
from $\Box E'$ under the assumption that we can finitely enumerate the
worlds accessible from any world (this is established in the file
\verb|AltBoxE.agda| in the Agda development). 

As per usual
in our development, we show our
rules to be locally sound, as witnessed by the following reduction:
\[
\begin{array}{c}
\infer[{\Box}E]
{\Gamma \ndArr C[w]\mathstrut}
{\infer[{\Box}I]{\Gamma \ndcpl \Box A[w]\mathstrut}
{\deduce{\forall w'.w\prec w' \longrightarrow \Gamma \ndcpl
    A[w']\mathstrut}{\mathcal D_1 \mathstrut}}
 & \deduce{(\forall w'. w \prec w' \longrightarrow \Gamma \ndcpl A[w'])
   \longrightarrow \Gamma \ndcpl C[w]}{\mathcal D_2 \mathstrut}\mathstrut}
\end{array}
\qquad
\]
\vspace{-10pt}
\[
\qquad\qquad\qquad\qquad\qquad\qquad\qquad\qquad\qquad\qquad\qquad\qquad\qquad
\qquad
\Longrightarrow_R
\quad
\begin{array}{c}
\deduce{\Gamma \ndcpl C[w] \mathstrut}{\mathcal D_2\,\mathcal D_1\mathstrut}
\end{array}
\]
Local completeness for modal necessity is the same as it
was for modal possibility; the second premise of the $\Box E$
rule essentially restates
the $\Box I$ rule.

Having shown our system to be locally sound and complete, we 
must now circle back around to show that the judgmental principles hold:
\begin{theorem}[Metatheory of {\bf CPL} natural deduction]
\label{thrm:meta_teth}~
\begin{itemize}
\item {\it Hypothesis principle}: If $A[w] \in \Gamma$, then $\Gamma \ndcpl A[w]$.
\item {\it Generalized weakening principle}: If $\Gamma \subseteq_w \Gamma'$ and $\Gamma \ndcpl A[w]$, then $\Gamma' \ndcpl A[w]$.
\item {\it Substitution principle}: If $\Gamma \ndcpl A[w]$ and 
 $\Gamma, A[w] \ndcpl C[w]$, then $\Gamma \ndArr C[w]$.
\end{itemize}
\end{theorem}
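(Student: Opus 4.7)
The hypothesis principle is immediate from the $\mathit{hyp}$ rule of Figure~\ref{fig:teth_nd}. The weakening and substitution principles are each proved by structural induction on a derivation, but each requires a preliminary lemma to bridge between the world of the statement being manipulated and the worlds appearing in modal subderivations.

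For generalized weakening, I would proceed by induction on the given derivation of $\Gamma \ndcpl A[w]$. The propositional cases ($\mathit{hyp}$, $\bot E$, ${\supset}I$, ${\supset}E$) are routine because the inductive hypothesis at the same world $w$ applies directly. The modal cases require a monotonicity lemma: if $\Gamma \subseteq_w \Gamma'$ and $w \prec w'$, then both $\Gamma \subseteq_{w'} \Gamma'$ and the reverse inclusion $\Gamma' \subseteq_{w'} \Gamma$ hold. This follows because the transitively-accessible conditions of $\subseteq_w$ force $\Gamma$ and $\Gamma'$ to agree on every $w''$ with $w \prec^+ w''$, and since $w \prec w'$ makes $w'$ itself such a world, the two contexts become locally equivalent there. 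With this lemma, the modal cases apply the inductive hypothesis in whichever direction is needed when weakening (or un-weakening) the premises and conclusions of $\Diamond I$, $\Box I$, $\Diamond E$, and $\Box E$; un-weakening is necessary when one must feed a proof living over $\Gamma'$ back into a higher-order premise originally quantified over proofs over $\Gamma$.

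For substitution I would first slightly generalize the statement to allow the conclusion world to be accessible from the hypothesis world: \emph{if $w \prec^* w'$, $\Gamma \ndcpl A[w]$, and $\Gamma, A[w] \ndcpl C[w']$, then $\Gamma \ndcpl C[w']$}. The stated principle is recovered by taking $w = w'$. The proof proceeds by structural induction on the second derivation. The hypothesis case splits on whether the assumption being discharged is the specific $A[w]$ (in which case I return the given derivation) or another hypothesis already present in $\Gamma$ (in which case I re-apply $\mathit{hyp}$). Propositional rules recurse directly. For each modal rule with a higher-order premise, I use the just-proved weakening principle to lift an auxiliary hypothesis $\Gamma \ndcpl A''[w'']$ into $\Gamma, A[w] \ndcpl A''[w'']$, apply the given higher-order premise, and then recursively substitute. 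The side condition $w \prec^* w'$ is preserved as the induction descends through modal rules, since the subderivation world $w''$ always satisfies $w' \prec w''$.

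The main obstacle is justifying the weakening step inside the modal cases of the substitution proof. Weakening $\Gamma \ndcpl A''[w'']$ to $\Gamma, A[w] \ndcpl A''[w'']$ requires $\Gamma \subseteq_{w''} \Gamma, A[w]$, which fails unless $A[w]$ does not live at a world transitively accessible from $w''$, i.e., unless $\neg(w'' \prec^+ w)$. The generalization $w \prec^* w'$ combined with the modal descent giving $w' \prec^+ w''$ yields $w \prec^+ w''$, so the required non-accessibility is supplied precisely by converse well-foundedness of $\prec$: a witness $w'' \prec^+ w$ would close a cycle. This is the one step where the one-world-at-a-time discipline and the converse well-founded accessibility relation are both genuinely necessary; everything else is bookkeeping.
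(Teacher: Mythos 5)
Your treatment of the hypothesis principle and of generalized weakening matches the paper's: weakening is indeed a structural induction on the derivation, and your monotonicity lemma --- that $\Gamma \subseteq_w \Gamma'$ together with $w \prec w'$ forces $\Gamma$ and $\Gamma'$ to agree exactly at every world reachable from $w'$, yielding inclusions in both directions --- is precisely what legitimizes the ``un-weakening'' of arguments fed back into higher-order premises.

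The substitution argument, however, has a genuine gap, and it is created by your generalization to $w \prec^* w'$. Once the induction has descended through a $\Diamond I$ or $\Box I$ premise, so that $w \prec^+ w'$ holds strictly, the subderivation of $\Gamma, A[w] \ndcpl C[w']$ may perfectly well end in ${\supset}I$, whose premise is $\Gamma, A[w], B[w'] \ndcpl C'[w']$. To ``recurse directly'' there you must supply $\Gamma, B[w'] \ndcpl A[w]$, and the only candidate route is to weaken $\Gamma \ndcpl A[w]$ by adding $B[w']$ --- but $\Gamma \subseteq_w (\Gamma, B[w'])$ fails when $w \prec^+ w'$, since $\subseteq_w$ forbids adding hypotheses at transitively accessible worlds (and must, because the proof of $A[w]$ may depend on the \emph{non}-provability of $B$ at $w'$). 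So the uniform generalized induction does not go through. The repair, which is also how the paper's ungeneralized ``structural induction on the second given derivation'' proceeds, is to keep the substitution tethered at $w = w'$ throughout: in the $\Diamond I$ and $\Box I$ premise cases the hypothesis $A[w]$ sits at a world no longer reachable from the subderivation's world $w''$ (acyclicity gives $\neg(w'' \prec^* w)$ from $w \prec^+ w''$), so it is inert and can be strengthened away outright by the generalized weakening principle with $(\Gamma, A[w]) \subseteq_{w''} \Gamma$; no recursive substitution, and hence no generalization of the statement, is needed there. Your handling of the higher-order premises of $\Diamond E$ and $\Box E$ --- weaken the supplied argument into the larger context, apply the original premise, recurse at the unchanged conclusion world --- is correct, and that is indeed the one place where converse well-foundedness is doing real work.
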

\begin{proof}
The hypothesis principle follows immediately from the rule {\it hyp}. 
The generalized weakening principle is established by structural induction
on given derivation, and the substitution principle is established by
structural induction on the second given derivation $\Gamma, A[w] \ndcpl C[w]$.
Both proofs appear in \verb|TetheredCPL/NatDeduction.agda| 
in the Agda development.
\end{proof}

\subsection{Sequent calculus}

Often we want to be able to
show that a judgment is not provable in a logic (for instance, we
better not be able to derive the judgment $\cdot \ndcpl \bot[w]$, which
would represent a closed contradiction). While natural
deduction is a canonical way of thinking about proofs, it is not
very useful as a tool for proving such negative statements about
logic. This is largely because 
natural deduction does not obey the so-called
\emph{sub-formula property} (all judgments in a proof refer only to
sub-formulas of the propositions present in the initial judgment). 
A sequent calculus system, on the other hand, 
obeys the sub-formula property and therefore
allows us to prove negative statements about a logic by refutation:
we assume the sequent is provable and, by case analysis on the 
structure of the derivation, derive a contradiction. The sequent
calculus 
for {\bf CPL} is given in Fig.~\ref{fig:teth_seq}.

\begin{figure}[tp]
\[
\infer[{\it init} \mbox{~~~($Q$ is an atomic proposition)}]{\Gamma,
  Q[w]  \seqcpl Q[w] }{}
 \quad
\infer[\bot{L}]{\Gamma \seqcpl C[w]}{\bot[w] \in \Gamma}
\]
\[
\infer[{\supset}R]{\Gamma \seqcpl A\supset B[w] }{\Gamma, A[w]
  \seqcpl B[w]} \quad 
\infer[{\supset}L]{\Gamma \seqcpl C[w] }{A\supset B[w] \in \Gamma
  & \Gamma \seqcpl A[w]  & \Gamma, B[w] \seqcpl C[w]} 
\]
\[
\infer[{\Diamond}R]{\Gamma \seqcpl \Diamond A[w]}{w\prec w' \quad \Gamma
\seqcpl A[w']} \quad
\infer[{\Box}R]{\Gamma \seqcpl \Box A [w]}{\forall w'.\, w\prec w'
  \longrightarrow \Gamma \seqcpl A[w']  } 
\]
\[
\infer[{\Diamond}L]{\Gamma \seqcpl C[w]}
{\Diamond A[w] \in \Gamma & \forall
  w'.\, w\prec w' \longrightarrow \Gamma \seqcpl A[w'] \longrightarrow
  \Gamma  \seqcpl C[w]}
\]
\[
\infer[{\Box}{L}]{\Gamma\seqcpl C[w]}{ \Box A[w] \in \Gamma & (\forall
  w'.\, w\prec w' \longrightarrow \Gamma\seqcpl A[w']) \longrightarrow 
 \Gamma \seqcpl C[w]}
\]
% \[
% \infer[{\notDiamond}R]{\Gamma \seqcpl \notDiamond A [w]}{\forall w'.\, w\prec w'
%   \longrightarrow \neg(\Gamma \seqcpl A[w'])  }
% \]\[
% \infer[{\notDiamond}{L}]{\Gamma, \notDiamond A[w] \seqcpl C[w]}{(\forall
%   w'.\, w\prec w' \longrightarrow \neg(\Gamma\seqcpl A[w'])) \longrightarrow 
%  \Gamma,\notDiamond A[w] \seqcpl C[w]}
% \]
% \[
% \infer[{\notBox}R]{\Gamma \seqcpl \notBox A[w]}{w\prec w' \quad \neg(\Gamma
% \seqcpl A[w'])} \quad
% \infer[{\notBox}L]{\Gamma, \notBox A[w] \seqcpl C[w]}{\forall
%   w'.\,w\prec w' \longrightarrow \neg(\Gamma \seqcpl A[w']) \longrightarrow
%   \Gamma,\notBox A[w] \Rightarrow C[w]}
% \]
%\vspace{-0.25in}
\caption{Sequent calculus for intuitionistic {\bf CPL}}
\label{fig:teth_seq}
\end{figure}

Even though sequent calculus systems are structured quite differently
than natural deduction systems, we can (and must!) establish the 
admissibility of the same defining principles.
\begin{theorem}[Metatheory of the {\bf CPL} sequent calculus]\label{thrm:meta_teth_seq}~
\begin{itemize}
\item Hypothesis principle: If $A[w_i] \in \Gamma$, then $\Gamma \seqcpl A[w_i]$.
\item Generalized weakening principle: If $\Gamma \subseteq_w \Gamma'$ and $\Gamma
  \seqcpl A[w]$, then $\Gamma' \seqcpl A[w]$.
\item Substitution principle: If $\Gamma \seqcpl A[w]$ and $\Gamma , A [ w ] \seqcpl C [ w
]$, then $\Gamma \seqcpl C [ w ]$.
\end{itemize}
\end{theorem}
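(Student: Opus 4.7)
The plan is to dispatch the three principles in sequence, with each using the induction principle most natural to it: induction on the proposition $A$ for the hypothesis principle, induction on the given derivation for generalized weakening, and a lexicographic induction on (cut formula, pair of derivations) for substitution.

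For the hypothesis principle I would prove the identity expansion lemma ``if $A[w]\in\Gamma$ then $\Gamma\seqcpl A[w]$'' by induction on $A$. The atomic case is \textit{init}; the $\bot$ case is $\bot L$; the $A\supset B$ case applies $\supset R$ and then $\supset L$, discharging the remaining subgoals by IH at $A$ and at $B$. Agreeably, the modal cases require no recursion on the subformula at all: for $\Box A[w]$ I would apply $\Box L$ on the hypothesis and take its higher-order premise to be the function that, given a family $f$ providing a derivation of $A[w']$ at every accessible $w'$, returns $\Box R(f)$ directly; for $\Diamond A[w]$ I would apply $\Diamond L$ on the hypothesis and take its higher-order premise to be the function that takes $w'$, a witness of $w\prec w'$, and a derivation of $A[w']$ and feeds them to $\Diamond R$.

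For generalized weakening I would induct on the given derivation. Rules whose premises stay at world $w$ transfer directly, since the first clause of $\Gamma\subseteq_w\Gamma'$ preserves any membership hypothesis at $w$. The modal right rules $\Diamond R$ and $\Box R$ recurse at a world $w'$ with $w\prec w'$ and so require the auxiliary monotonicity fact ``if $\Gamma\subseteq_w\Gamma'$ and $w\prec w'$, then $\Gamma\subseteq_{w'}\Gamma'$.'' This is immediate from the definition: every $w''$ with $w'\prec^* w''$ satisfies $w\prec^+ w''$, so both clauses of $\Gamma\subseteq_w\Gamma'$ specialize at $w''$, pinning $\Gamma$ and $\Gamma'$ to agree there.

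The substitution principle is where the real work lies, and the main obstacle is the principal cut on a modal formula because the $L$-rules have higher-order premises. I would proceed by a lexicographic induction, primary on the cut formula $A$ and secondary on the pair of derivations, with the usual partitioning into principal, left-commutative, and right-commutative cases; the non-modal principal cases and all commutative cases go through routinely. For the principal cut on $\Diamond A[w]$, the right-hand derivation ends in $\Diamond R$ with witness $w\prec w'$ and subderivation $\Gamma\seqcpl A[w']$, while the left-hand derivation ends in $\Diamond L$ on the cut formula with some premise function $\phi$; I would feed $w'$, the witness, and a weakened copy of the $\Diamond R$ subderivation into $\phi$ to produce a derivation of $\Gamma,\Diamond A[w]\seqcpl C[w]$, and then cut the original right-hand derivation against this new derivation using the IH --- the cut formula is unchanged, but the new derivation is a strict sub-component of $\phi$ and thus smaller than the original $\Diamond L$ derivation. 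The $\Box$ principal case is analogous, threading the entire weakened $\Box R$ premise family through the $\phi$ of $\Box L$. Setting up the induction metric so that these recursive same-formula cuts are well-founded is exactly the point that Section~\ref{sec:note} flags as becoming intractable for \textbf{CPL*}; in the tethered setting of \textbf{CPL} it remains tractable.
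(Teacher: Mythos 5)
Your proposal matches the paper's proof: the hypothesis principle by structural induction on $A$, generalized weakening by structural induction on the derivation, and substitution by lexicographic induction primarily on the cut formula and secondarily on the two derivations (where one may shrink while the other and the formula stay fixed, exactly as in your same-formula modal cuts through the higher-order premise functions). The details you supply for the modal cases and the $\subseteq_w$ monotonicity lemma are correct elaborations of what the paper leaves to the Agda development.
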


\begin{proof}
The hypothesis principle is established by structural induction on the 
proposition $A$, and the generalized weakening principle is established
by structural induction on the given derivation. The substitution principle
is proved by lexicographic induction, primarily on the structure of the
proposition $A$ and secondarily on the structures of both given derivations:
if the proposition $A$ stays the same, then either the first derivation
gets smaller and the second stays the same or the second derivation gets
smaller and the first stays the same.
All proofs appear in \verb|TetheredCPL/Sequent.agda| in the Agda development.
\end{proof}
In sequent calculi, the hypothesis principle is frequently called
{\it identity admissibility} and the substitution principle is frequently
called {\it cut admissibility}. The admissibility of cut and identity 
establish the global analogues of local soundness and
completeness, respectively. 

By presenting a sequent calculus system as a convenient way of establishing
non-provability of hypothetical judgments in a natural deduction system, 
we have presupposed that the two presentations are equivalent. Luckily,
we were right:

\begin{theorem}[Equivalence]
$\Gamma \ndcpl A [w]$ if and only if $\Gamma \seqcpl A [w]$. 
\end{theorem}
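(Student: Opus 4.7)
The plan is to prove each direction separately by structural induction on the given derivation, using the metatheorems already established in Theorems~\ref{thrm:meta_teth} and~\ref{thrm:meta_teth_seq}.

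For the direction $\Gamma \seqcpl A[w] \Longrightarrow \Gamma \ndcpl A[w]$, I would induct on the sequent derivation. The right rules translate directly to introduction rules: $\supset R$ becomes $\supset I$, $\Diamond R$ becomes $\Diamond I$, and $\Box R$ becomes $\Box I$, in each case obtaining the natural deduction subderivations by invoking the induction hypothesis (lifting through the meta-level universal in the modal cases). The $\mathit{init}$ rule is discharged by the natural deduction hypothesis principle applied to atomic $Q$, and $\bot L$ similarly by hypothesis followed by $\bot E$. The left rules are the interesting cases because they require the natural deduction substitution principle from Theorem~\ref{thrm:meta_teth}: for $\supset L$ with principal formula $A \supset B[w] \in \Gamma$, I obtain $\Gamma \ndcpl A \supset B[w]$ by hypothesis, derive $\Gamma \ndcpl B[w]$ by $\supset E$ using the IH on the premise for $A$, and then substitute into the IH of the third premise to get $\Gamma \ndcpl C[w]$. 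For $\Diamond L$ and $\Box L$, the elimination rule of natural deduction is shaped exactly to match the corresponding left rule, so one applies the relevant elimination rule after obtaining the principal premise by hypothesis and mapping the IH through the higher-order premise.

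For the direction $\Gamma \ndcpl A[w] \Longrightarrow \Gamma \seqcpl A[w]$, I would induct on the natural deduction derivation. The $\mathit{hyp}$ rule is handled by identity admissibility from Theorem~\ref{thrm:meta_teth_seq}. Introduction rules map directly to right rules by applying the IH (again lifting through the universal in the modal rules). Elimination rules require cut admissibility: for $\supset E$, apply the IH to both premises to get $\Gamma \seqcpl A \supset B[w]$ and $\Gamma \seqcpl A[w]$, then cut against a fresh derivation that uses $\supset L$ on the hypothesis $A \supset B[w]$ to land on $C[w]$. For $\Diamond E$ and $\Box E$, the higher-order premise needs to be mapped through the IH to yield a function producing sequent derivations, which is then combined with the $\Diamond L$ (resp.\ $\Box L$) rule via cut against the principal formula.

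The main obstacle will be the modal elimination cases in the natural deduction to sequent direction, where the higher-order premise $\forall w'.\, w \prec w' \longrightarrow \Gamma \ndcpl A[w'] \longrightarrow \Gamma \ndcpl C[w]$ must be transported to its sequent analogue. One needs to pre-compose the mapped function with the IH at the inner argument and post-compose with cut admissibility, and then assemble the result with a use of $\Diamond L$ or $\Box L$ via cut on the sequent derivation of the principal formula $\Diamond A[w]$ or $\Box A[w]$. Because the substitution principle in {\bf CPL} is tethered (same world on both sides), the cut used here is always at world $w$, matching the world at which the elimination rule is applied, so this is well-typed. Both halves of the proof are ultimately routine once this bookkeeping is in place, and the formalization should follow the same scheme as for comparable judgmental modal logics.
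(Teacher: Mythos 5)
There is a genuine gap in your induction setup, and it surfaces exactly at the spot you identify as ``the main obstacle.'' Consider the ${\Diamond}E$ case in the direction from natural deduction to sequent calculus. To apply ${\Diamond}L$ you must produce a function of type $\forall w'.\, w \prec w' \longrightarrow \Gamma \seqcpl A[w'] \longrightarrow \Gamma \seqcpl C[w]$ from the given function of type $\forall w'.\, w \prec w' \longrightarrow \Gamma \ndcpl A[w'] \longrightarrow \Gamma \ndcpl C[w]$. Your plan is to ``pre-compose the mapped function with the IH at the inner argument,'' but the inner argument sits in a \emph{contravariant} position: you are handed an arbitrary derivation of $\Gamma \seqcpl A[w']$ and must convert it to $\Gamma \ndcpl A[w']$. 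That conversion is (a) the \emph{opposite} direction of the equivalence, so it is not available if you prove the two directions separately, and (b) invoked on an arbitrary derivation at the world $w'$, not on a subderivation of anything, so a structural induction on the given derivation does not license the call. The symmetric problem occurs in the ${\Diamond}L$/${\Box}L$ cases of the other direction, which need the forward translation at accessible worlds. The covariant position (the output $C[w]$ of the higher-order premise) is fine under structural induction; the contravariant one is not.

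The paper's proof resolves this by proving both directions \emph{simultaneously}, with the primary induction on the (converse well-founded) accessibility relation and only a secondary structural induction on the derivation: the cross-direction calls always occur at a world $w'$ with $w \prec w'$, so the primary metric strictly decreases even though the derivation argument is arbitrary. Your treatment of the non-modal cases (using identity admissibility for \emph{hyp}/\emph{init}, cut admissibility for ${\supset}E$, and the natural deduction substitution principle for the left rules) matches the paper and is fine; what you need to change is to merge the two directions into one mutually recursive lemma and replace ``structural induction on the given derivation'' with a lexicographic induction whose primary component is the accessibility relation.
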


\begin{proof}
Both directions must be proved simultaneously, primarily by 
induction on the accessibility relation and secondarily by 
structural induction on the given derivation. The defining principles
of the sequent calculus presentation (Theorem~\ref{thrm:meta_teth_seq}) 
are used in 
the forward direction, and the defining principles of the natural deduction 
presentation (Theorem~\ref{thrm:meta_teth}) are used in the backward direction.
The proof appears in \verb|TetheredCPL/Equiv.agda| in the Agda development.
\end{proof}

\subsection{Example}

We now formalize the example that motivated
Section~\ref{sec:reflect2}, showing that the sequent
$\Diamond Q[\alpha] \seqcpl \bot[\alpha]$ is derivable (and by the
equivalence of natural deduction and sequent calculus, that $\Diamond
Q[\alpha] \ndcpl \bot[\alpha]$ is derivable). The last rule in our 
proof will be $\Diamond L$:
\[
\infer[\Diamond L]{\Diamond Q[\alpha] \seqcpl \bot [\alpha]}
{\deduce{\forall w'.\, \alpha \prec w' 
\longrightarrow \Diamond Q[\alpha] \seqcpl Q [w'] 
\longrightarrow \Diamond Q[\alpha] \seqcpl \bot [\alpha]}{\mathcal E} }
\]
Therefore, it suffices to show that for all $w'$ accessible from $\alpha$,
$\Diamond Q[\alpha] \seqcpl Q[w']$ implies
$\Diamond Q[\alpha] \seqcpl \bot[\alpha]$. In this running
example, there are two worlds $\beta$ and $\gamma$ accessible from $\alpha$, 
so we must show that $\Diamond Q[\alpha] \seqcpl Q[\beta]$ implies 
$\Diamond Q[\alpha] \seqcpl \bot[\alpha]$ and that
$\Diamond Q[\alpha] \seqcpl Q[\gamma]$ implies 
$\Diamond Q[\alpha] \seqcpl \bot[\alpha]$. The reasoning in both cases
is exactly the same; we'll prove only the first here.

The way we prove that $\Diamond Q[\alpha] \seqcpl Q[\beta]$ implies 
$\Diamond Q[\alpha] \seqcpl \bot[\alpha]$ is to prove that there is 
{\it no} proof of $\Diamond Q[\alpha] \seqcpl Q[\beta]$, which means
that the implication holds vacuously. To prove this, we assume $\Diamond
Q[\alpha] \seqcpl Q[\beta]$ is derivable. The only possible rule that
could potentially allow us to conclude this sequent is $\Diamond L$, since there
is no $Q[\beta]$ in the context in order to apply the {\it init}
rule. However, since the worlds $\alpha$ and $\beta$ do not match, the
rule does not apply and the sequent is not provable.

\section{{\bf CPL*}, De-tethered constructive provability
  logic}\label{sec:cpl*}

\begin{figure}[tp]
\[
\infer[{\it hyp}]{\Gamma,
  A[w]  \ndcpls A[w] }{}
 \quad
\infer[\bot{E}]{\Gamma \ndcpls C[w']}{w' \prec^* w & \Gamma \ndcpls \bot[w] }
\]
\[
\infer[{\supset}I]{\Gamma \ndcpls A\supset B[w] }{\Gamma, A[w]
  \ndcpls B[w]} \quad 
\infer[{\supset}E]{\Gamma \ndcpls B[w] }{\Gamma \ndcpls A\supset B[w]
  & \Gamma \ndcpls A[w]} 
\]
\[
\infer[{\Diamond}I]{\Gamma \ndcpls \Diamond A[w]}{w\prec w' \quad \Gamma
\ndcpls A[w']} \quad
\infer[{\Box}I]{\Gamma \ndcpls \Box A [w]}{\forall w'.\, w\prec w'
  \longrightarrow \Gamma \ndcpls A[w']  } 
\]
\[
\infer[{\Diamond}E]{\Gamma \ndcpls C[w'']}
{w'' \prec^* w & \Gamma \ndcpls \Diamond A[w] & \forall
  w'.\, w\prec w' \longrightarrow \Gamma \ndcpls A[w'] \longrightarrow
  \Gamma  \ndcpls C[w'']}
\]
\[
\infer[{\Box}{E}]{\Gamma\ndcpls C[w'']}{ w'' \prec^* w  & \Gamma \ndcpls \Box A[w]  & (\forall
  w'.\, w\prec w' \longrightarrow \Gamma\ndcpls A[w']) \longrightarrow 
 \Gamma \ndcpls C[w'']}
\]
%\vspace{-0.25in}
\caption{Intuitionistic {\bf CPL*} natural deduction}
\label{fig:deteth_nd}
\end{figure}

The natural deduction rules for {\bf CPL*}
are presented in Figure~\ref{fig:deteth_nd}. The only
difference from the corresponding rules of the previous section is that
we no longer restrict the conclusion of elimination rules to be at the world
$w$ of the judgment we are eliminating, instead allowing it to be at a world 
$w''$, 
provided that $w'' \prec^* w$ (an exception is ${\supset} E$,
since the rule does not mention an arbitrary proposition $C$).

The proofs of local soundness and completeness are analogous to the ones
discussed in the previous section; the substitution principle
is de-tethered in the same way that the elimination rules are.
\begin{theorem}[Metatheory of {\bf CPL*} natural deduction]
\label{thrm:meta_deteth}~
\begin{itemize}
\item {\it Hypothesis principle}: If $A[w] \in \Gamma$, then $\Gamma \ndcpls A[w]$.
\item {\it Generalized weakening principle}: If $\Gamma \subseteq_w \Gamma'$ and $\Gamma \ndcpls A[w]$, then $\Gamma' \ndcpls A[w]$.
\item {\it Substitution principle}: If $w' \prec^* w$, 
$\Gamma \ndcpls A[w]$, and 
$\Gamma, A[w] \ndcpls C[w']$, then 
$\Gamma \ndcpls C[w']$.
\end{itemize}
\end{theorem}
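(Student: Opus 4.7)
The plan is to address the three principles in the order stated, paralleling the proof of Theorem~\ref{thrm:meta_teth} for tethered {\bf CPL}. The hypothesis principle is immediate from the {\it hyp} rule (with an appeal to exchange if $A[w]$ is not the rightmost assumption in $\Gamma$).

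For generalized weakening I induct on the given derivation $\Gamma \ndcpls A[w]$. The bookkeeping amounts to showing that the indexed preorder $\subseteq_w$ descends correctly to subderivations, and for this two lemmas suffice---both following quickly from the two-clause definition of $\subseteq_w$ and transitivity of $\prec^*$. First, if $\Gamma \subseteq_w \Gamma'$ and $w \prec w'$ then $\Gamma \subseteq_{w'} \Gamma'$ (needed for the subderivations of ${\Diamond}I$, ${\Box}I$, and the higher-order premises of ${\Diamond}E$ and ${\Box}E$, which all move to an accessible world). Second, if $\Gamma \subseteq_{w''} \Gamma'$ and $w'' \prec^* w$ then $\Gamma \subseteq_w \Gamma'$ (needed for the major premises of ${\bot}E$, ${\Diamond}E$, and ${\Box}E$, whose world is deeper than the conclusion world). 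The remaining rule-by-rule case analysis is then routine.

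For substitution I induct on the second derivation $\Gamma, A[w] \ndcpls C[w']$. The base case {\it hyp} splits in the usual way: if the discharged assumption is $A[w]$ itself then $w' = w$, the precondition holds trivially, and I return the given first derivation; otherwise the concluded judgment lies in $\Gamma$ and I reapply {\it hyp}. The introduction rules, ${\supset}E$, and the higher-order premises of the elimination rules all keep the conclusion at world $w'$, so the IH precondition $w' \prec^* w$ is preserved and the IH applies once the first derivation has been weakened to the appropriately extended context. The major premises of ${\bot}E$, ${\Diamond}E$, and ${\Box}E$ sit at a world $w^*$ with $w' \prec^* w^*$, and here I split on whether $w^* \prec^* w$: if so, the IH precondition is met and substitution goes through directly; if not, $A[w]$ is inaccessible from $w^*$ in the sense that $\Gamma, A[w] \subseteq_{w^*} \Gamma$, and generalized weakening strips $A[w]$ from the subderivation before the rule is reapplied.

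The hard part will be the higher-order premises of ${\Diamond}E$ and ${\Box}E$ in the substitution argument: such a premise binds a hypothesis $A_e[w'']$ at a world $w''$ with $w^* \prec w''$ that may sit strictly above $w$, and applying the IH in the extended context $\Gamma, A_e[w'']$ requires weakening the first derivation through a region where $\subseteq_w$ is bidirectional---which fails unless $A_e[w'']$ already appears in $\Gamma$. This is essentially the same subtlety that derails Agda's termination checker in the de-tethered cut admissibility proof noted in Section~\ref{sec:note}. I would resolve it by formulating the induction with an explicit lexicographic metric over the converse well-founded accessibility relation and derivation size, verifying by hand that each recursive call decreases this metric.
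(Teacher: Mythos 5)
Your treatment of the hypothesis principle matches the paper, and your two order-lemmas about $\subseteq_w$ are correct and do handle the first-order bookkeeping. But your induction structure for generalized weakening is not the paper's and, as stated, does not go through. The paper proves weakening by a \emph{primary} induction on the converse well-founded accessibility relation, with the structural induction on the derivation only secondary, and this is forced by the \emph{contravariant} positions in the higher-order premises of ${\Diamond}E$ and ${\Box}E$. To weaken a premise of the form $\forall w_1.\, w_0 \prec w_1 \longrightarrow \Gamma \ndcpls B[w_1] \longrightarrow \Gamma \ndcpls C[w'']$ to the context $\Gamma'$, the new function receives an \emph{arbitrary} derivation of $\Gamma' \ndcpls B[w_1]$ and must convert it back to $\Gamma \ndcpls B[w_1]$ before the old premise can be invoked. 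That conversion is itself a weakening call (legitimate because $\Gamma$ and $\Gamma'$ agree exactly at all worlds $w^*$ with $w'' \prec^+ w^*$, so $\Gamma' \subseteq_{w_1} \Gamma$), but it is applied to a derivation that is not a subderivation of the one you are inducting on; it is justified only because $w'' \prec^+ w_1$ and $\prec$ is converse well-founded. Your ``routine rule-by-rule case analysis'' silently assumes this recursive call is available.

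For substitution you eventually arrive at the paper's induction metric (accessibility relation primary, derivation secondary), but you misdiagnose the hard case, and your proposed fix does not address it. The obstruction in the higher-order premises is not termination: to feed the old premise you must turn the supplied derivation of $\Gamma \ndcpls B[w_1]$ into one of $\Gamma, A[w] \ndcpls B[w_1]$, and when $w_1 \prec^+ w$ the relation $\Gamma \subseteq_{w_1} (\Gamma, A[w])$ simply fails --- adding a hypothesis at a world reachable from $w_1$ is exactly what $\subseteq_{w_1}$ forbids, and for good reason, since it can create new proofs that the reflection in ${\Diamond}E$/${\Box}E$ must account for. A lexicographic metric tells you the recursion is well-founded but gives you nothing to recurse \emph{on} here. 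What is needed is an additional statement proved simultaneously with substitution: a ``backwards'' lemma showing that a hypothesis $A[w]$ already provable from $\Gamma$ may be adjoined to the context of a derivation at such a world without changing provability. The paper flags precisely this mutual backwards-substitution lemma in its discussion of the focused calculus. Two smaller slips: the premises of ${\Diamond}I$ and ${\Box}I$ live at worlds accessible from $w'$, not at $w'$, so they belong with the cases where you split on whether the premise world satisfies $\prec^*\, w$; and the Agda termination trouble in Section~\ref{sec:note} concerns the focused sequent calculus cut admissibility, not this natural-deduction substitution principle.
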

\begin{proof}
The hypothesis principle again follows immediately from the rule {\it hyp}. 
The generalized weakening principle is established by a primary
induction on the accessibility relation and a secondary
structural induction
on the given derivation. The substitution principle is established
by a primary induction on the accessibility relation and then a 
secondary structural induction on the second given derivation 
$\Gamma, A[w] \ndcpls C[w']$.
Both proofs appear in \verb|DetetheredCPL/NatDeduction.agda| 
in the Agda development.
\end{proof}

\subsection{Focused sequent calculus}

The sequent calculus formulation of {\bf CPL} is convenient for 
establishing very simple properties of provability and non-provability,
and it is possible to give a very similar sequent calculus 
for {\bf CPL*} \cite{simmons11constructive}.
However, because we wish to consider {\bf CPL*} as the basis of 
a logic programming language, we follow Andreoli \shortcite{andreoli92logic}
in developing a much more restricted {\it focused} sequent calculus. 
Unlike Andreoli, we use an explicitly polarized version of our logic.

Propositions in a polarized presentation of logic are split into 
two syntactic categories, {\it positive} 
propositions $A^+$ and {\it negative} propositions $A^-$. 
A full discussion of polarity assignment for connectives
is outside the scope of this
article; as a rule of thumb, the {\it positive} connectives are those
with large eliminations. An elimination is large when the proposition whose
truth is established by an elimination rule 
is some proposition $C$ with no immediate connection to the
proposition being eliminated; this indicates that
$\bot$, $\Diamond A$, and $\Box A$ are positive connectives and $A \supset B$
is not.
\[
\begin{array}{lcl}
A^+,B^+ & ::= & Q^+ \mid \;{\downarrow} A^- \mid \bot \mid \Diamond A^+\mid \Box
A^+\\
A^-,B^- & ::= & Q^- \mid \;{\uparrow} A^+ \mid A^+ \supset B^-
\end{array}
\]
Each atomic 
proposition can be positive or negative, but never both, as if each
atomic proposition in the un-polarized logic was always already intrinsically
positive or negative and our previous natural deduction and sequent
calculi were unable to notice.

Both of the modal operators in constructive provability logic
are naturally positive on the {\it outside}. 
However, our choice of the polarity for
the proposition {\it inside} the modality appears to be arbitrary: 
$\Diamond A^+$
and $\Diamond A^-$ would both be reasonable ways to polarize
the possibility modality. Polarization of propositions
is a property that affects {\it proofs},
not {\it provability}, so the modalities $\Diamond A$ and $\Box A$, 
which, in constructive provability logic, 
only care about the provability of the sub-formula $A$,
are naturally
indifferent to the treatment of $A$ as a positive or negative proposition.

\begin{figure}
\fbox{$\rfoc{\Gamma}{C^-[w]}$}
\[
\infer[QR^+]{\rfoc{\Gamma , Q^+[w]}{Q^+[w]}}{}
\qquad
\infer[{\downarrow} R]{\rfoc{\Gamma}{{\downarrow} A^-[w]}}
{\inv{\Gamma ; \cdot}{A^-[w]}}
\]
\[
\infer[\Diamond R]{\rfoc{\Gamma}{\Diamond A^+ [w]}}
{w \prec w' & \inv{\Gamma ; \cdot}{{\uparrow}A^+[w']} }
\qquad
\infer[\Box R]
{\rfoc{\Gamma}{\Box A^+[w]}}
{\forall w'. w \prec w' \longrightarrow \inv{\Gamma ; \cdot}{{\uparrow}A^+[w']}}
\]

\fbox{$\inv{\Gamma; \cdot}{C^-[w]}$}
\[
\infer{Q^+\,\mathit{stable}^+\mathstrut}{}
\qquad
\infer{{\downarrow}A^-\,\mathit{stable}^+\mathstrut}{}
\qquad
\infer{Q^-\,\mathit{stable}^-\mathstrut}{}
\qquad
\infer{{\uparrow}A^+\,\mathit{stable}^-\mathstrut}{}
\]
\[
\infer[{\supset} R]{\inv{\Gamma ; \cdot}{A^+ \supset B^-[w]}}
{\inv{\Gamma ; A^+[w]}{B^-[w]}}
\qquad
\infer[L]{\inv{\Gamma ; A^-[w']}{C^-[w]}}
{A^+\;\stable^+ & \inv{\Gamma , A^+[w'] ; \cdot}{C^-[w]}}
\]
\[
\infer[{\downarrow} L]
{\inv{\Gamma , {\downarrow} A^-[w'] ; \cdot}{C^-[w]}}
{C^-\,\stable^- &
 w \prec^* w' & \lfoc{\Gamma, {\downarrow} A^-[w']}{A^-[w']}{C^-[w]}}
\]
\[
\infer[\bot L]{\inv{\Gamma ; \bot[w']}{C^-[w]}}{}
\qquad
\infer[\Diamond L]{\inv{\Gamma ; \Diamond A^+[w']}{C^-[w'']}}
{\forall w. w' \prec w \longrightarrow \inv{\Gamma ; \cdot }{{\uparrow}A^+[w]}
  \longrightarrow \inv{\Gamma ; \cdot}{C^-[w'']} }
\]
\[
\infer[\Box L]{\inv{\Gamma ; \Box A^+[w']}{C^-[w'']}}
{(\forall w. w' \prec w \longrightarrow \inv{\Gamma ; \cdot}{{\uparrow}A^+[w]})
  \longrightarrow \inv{\Gamma ; \cdot}{C^-[w'']}}
\qquad
\infer[{\uparrow} R]{\inv{\Gamma ; \cdot}{{\uparrow} A^+[w]}}
{\rfoc{\Gamma}{A^+[w]}}
\]

\fbox{$\lfoc{\Gamma}{A^-[w']}{C^-[w]}$}
\[
\infer[QL^-]{\lfoc{\Gamma}{Q^-[w]}{Q^-[w]}}{}
\qquad
\infer[{\uparrow} L]{\lfoc{\Gamma}{{\uparrow} A^+[w']}{C^-[w]}}
{\inv{\Gamma ; A^+[w']}{C^-[w]}}
\]
\[
\infer[{\supset} L]{\lfoc{\Gamma}{A^+ \supset B^-[w']}{C^-[w]}}
{\rfoc{\Gamma}{A^+[w']} & \lfoc{\Gamma}{B^-[w']}{C^-[w]} }
\]
\caption{Focused sequent calculus for intuitionistic {\bf CPL*}}\label{fig:foc_seq}
\end{figure}

To develop the focused calculus, we require three types
of sequent: a \emph{right focus} sequent $\rfoc{\Gamma}{A^+[w]}$, describing a
state where non-invertible right rules are applied to positive
propositions; a \emph{left focus} sequent $\lfoc{\Gamma}{A^-[w']}{C^-[w]}$,
where non-invertible left rules are applied to negative propositions
(we typically say that the proposition $A^-$ is under focus);
and an \emph{inversion} sequent $\inv{\Gamma ; \Omega}{A^-[w]}$,
describing everything else (the additional context $\Omega$, which 
is either $\cdot$ or a single judgment $A^+[w]$, 
is called the \emph{inversion context}). We define the
system in such a way that whenever the inversion context is non-empty,
there is only one applicable rule -- the one that decomposes the
connective in the inversion context. We require two additional
judgments, $A^+\,\stable^+$ and $A^-\,\stable^-$, which
restrict the inversion phase. 
The rules defining the focused {\bf CPL*} sequent
calculus are given in Fig.~\ref{fig:foc_seq}.

\begin{figure}
\begin{align*}
 & 
 & &
 & (\cdot)^\circledcirc & = \cdot
\\
(Q^+)^\oplus & = Q^+
 & (Q^+)^\ominus & = {\uparrow}Q^+
 & (\Gamma, Q^+[w])^\circledcirc & = \Gamma^\circledcirc, Q^+[w]
\\
(\bot)^\oplus & = \bot
 & (\bot)^\ominus & = {\uparrow}\bot
 & (\Gamma, \bot[w])^\circledcirc & = \Gamma^\circledcirc, {\downarrow} {\uparrow} \bot[w]
\\
(\Diamond A)^\oplus & = \Diamond A^\oplus 
 & (\Diamond A)^\ominus & = {\uparrow}(\Diamond A^\oplus)
 & (\Gamma, \Diamond A[w])^\circledcirc & = \Gamma^\circledcirc, {\downarrow}{\uparrow}(\Diamond A^\oplus)[w]
\\
(\Box A)^\oplus & = \Box A^\oplus 
 & (\Box A)^\ominus & = {\uparrow}(\Box A^\oplus)
 & (\Gamma, \Box A[w])^\circledcirc & = \Gamma^\circledcirc, {\downarrow}{\uparrow}(\Box A^\oplus)[w]
\\
(Q^-)^\oplus & = {\downarrow}Q^-
 & (Q^-)^\ominus & = Q^-
 & (\Gamma, Q^-[w])^\circledcirc & = \Gamma^\circledcirc, {\downarrow}Q^-[w]
\\
(A \supset B)^\oplus & = {\downarrow}(A^\oplus \supset B^\ominus)
 & (A \supset B)^\ominus & = A^\oplus \supset B^\ominus
 & (\Gamma, A \supset B[w])^\circledcirc 
 & = \Gamma^\circledcirc, {\downarrow}(A^\oplus \supset B^\ominus)[w]
\end{align*}
\caption{Polarization of propositions and contexts}\label{fig:polarize}
\end{figure}

Validating the judgmental principles is quite complex in focused {\bf CPL*}; 
the proof adapts techniques used in the analogous proofs for {\bf CPL}
as well as the {\it structural focalization} techniques described
by Simmons \shortcite{simmons11structural}.
The generalized weakening principle is established in 
\verb|FocusedCPL/Weakening.agda|.
The substitution principle is established as a corollary of the
cut admissibility, which is established in 
\verb|FocusedCPL/Cut.agda|. Notably, in order to prove the substitution 
theorem, we must simultaneously prove a 
a {\it backwards} substitution theorem establishing that
$\inv{\Gamma;\cdot}{A[w]}$ and $\inv{\Gamma;\cdot}{C[w']}$ together
imply $\inv{\Gamma;\cdot}{C[w']}$; this fact does not follow
from generalized weakening when $w' \prec^+ w$. 
Finally, the hypothesis principle is established as a corollary of 
identity expansion in \verb|FocusedCPL/Identity.agda|.

We only establish a weak form of equivalence between the focused sequent
calculus and the natural deduction system; we define a polarization
strategy (Figure~\ref{fig:polarize}) that maps unpolarized propositions and
contexts to polarized ones. It is more robust to define equivalence
on the basis of {\it erasing} polarized propositions and contexts to
unpolarized ones \cite{simmons11structural},
but this formulation is sufficient for our purposes.

\begin{theorem}[Equivalence]
$\Gamma \ndcpls A[w]$ if and only if 
$\inv{\Gamma^\circledcirc; \cdot}{A^\ominus[w]}$
\end{theorem}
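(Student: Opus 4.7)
The plan is to prove both directions of the equivalence by structural induction, leveraging the three pillars of focused metatheory established in the files cited just above: identity expansion (which provides the hypothesis principle at arbitrary propositions), cut admissibility together with its backward companion, and generalized weakening. Because the paper only asks for the weak, polarization-based equivalence, it suffices to connect ND derivations with focused derivations on the specific polarized image, not to invert a general erasure.

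For the forward direction, I would induct on the derivation of $\Gamma \ndcpls A[w]$ and build $\inv{\Gamma^\circledcirc;\cdot}{A^\ominus[w]}$ rule by rule. The hypothesis rule is discharged by identity expansion at the type of the hypothesis, after an $L$ or ${\downarrow}L$ step unpacks whatever shifts the context polarization wrapped around it (e.g.\ the ${\downarrow}Q^-$ wrapping or the ${\downarrow}{\uparrow}(\Box A^\oplus)$ wrapping). The introduction rules mirror directly: ${\supset}I$ becomes ${\supset}R$, and $\Diamond I$, $\Box I$ become $\Diamond R$, $\Box R$ followed by an ${\uparrow}R$ to reach the right-focus sequent expected on the premise. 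The delicate cases are the elimination rules. For $\Diamond E$, the inductive hypothesis on the major premise yields $\inv{\Gamma^\circledcirc;\cdot}{{\uparrow}(\Diamond A^\oplus)[w]}$; this must be cut against a derivation ending in $\Diamond L$ whose higher-order premise is supplied by the inductive hypothesis on the second premise of $\Diamond E$. The side condition $w'' \prec^* w$ present on $\Diamond E$ (and $\Box E$, and $\bot E$) is exactly what is needed for the backward cut lemma to apply when the conclusion world lies below the world of the eliminated modal judgment.

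For the backward direction, I would perform a mutual structural induction over the three focused judgment forms. A right-focus sequent $\rfoc{\Gamma}{A^+[w]}$ and an inversion sequent $\inv{\Gamma;\Omega}{A^-[w]}$ each translate into an ND derivation of the erased conclusion under the erased context (with $\Omega$ added as an extra ND hypothesis). A left-focus sequent $\lfoc{\Gamma}{A^-[w']}{C^-[w]}$ translates to the statement that any ND derivation of $A^-$ at $w'$ can be turned into an ND derivation of $C^-$ at $w$ via the substitution principle of Theorem~\ref{thrm:meta_deteth}; the required $w \prec^* w'$ side condition is exactly what the focused $L$, ${\downarrow}L$, and ${\supset}L$ rules propagate. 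Under this reading the shift rules ${\uparrow}R$, ${\uparrow}L$, ${\downarrow}R$, ${\downarrow}L$ disappear silently, the modal rules translate to their ND counterparts, and ${\supset}L$ becomes an ND ${\supset}E$ followed by substitution.

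The main obstacle is bookkeeping the polarization table: a single ND hypothesis $A[w]$ may become a doubly shifted context entry such as ${\downarrow}{\uparrow}(\Box A^\oplus)[w]$, so an apparently atomic ND step (e.g.\ hypothesis or the application of a substitution) has to be simulated by a focused cut combined with explicit ${\downarrow}L$ and ${\uparrow}R$ unpackings. Once this translation scheme is codified row by row against Figure~\ref{fig:polarize} and verified to commute with the induction, the remaining cases are routine and reduce to matching the shape of each rule in Figure~\ref{fig:deteth_nd} against the corresponding combinators in Figure~\ref{fig:foc_seq}.
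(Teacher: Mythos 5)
Your translation scheme (identity expansion for hypotheses, cut plus the backward cut lemma for eliminations, the substitution principle for left-focus sequents) is the right skeleton and matches the paper's high-level strategy. But your induction structure has a genuine gap: you propose to prove each direction separately by structural induction on the given derivation, and that induction does not go through for the higher-order modal premises. Consider the forward direction at a $\Diamond E$ inference. Its second premise is a \emph{function} taking, among other things, a derivation of $\Gamma \ndcpls A[w']$ at an accessible world $w'$ and returning a derivation of $\Gamma \ndcpls C[w'']$. To build the corresponding higher-order premise of the focused $\Diamond L$ rule you are handed a focused derivation $\inv{\Gamma^\circledcirc;\cdot}{{\uparrow}A^\oplus[w']}$ and must produce something you can feed to that function --- i.e.\ you need the \emph{backward} translation at the accessible world $w'$ before you can invoke any inductive hypothesis on the function's output. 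Symmetrically, translating the higher-order premise of a focused $\Diamond L$ or $\Box L$ back into natural deduction requires the \emph{forward} translation at accessible worlds. Neither direction is provable on its own by structural induction on its input derivation.

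The paper resolves this by proving both directions \emph{simultaneously}, with a primary induction on the (converse well-founded) accessibility relation and only a secondary structural induction on the given derivation: at world $w$ one assumes the full equivalence, in both directions, at every world $w'$ with $w \prec w'$, and uses those outer inductive hypotheses to discharge the higher-order premises that reflect over provability at accessible worlds. This is the same induction pattern the paper uses for essentially every {\bf CPL*} metatheorem (cf.\ Theorem~\ref{thrm:meta_deteth}), and it is forced by the reflection-over-provability design of the logic. Your proposal needs to be restructured around this mutual, accessibility-indexed induction; once that is in place, the case analysis you describe is essentially the paper's proof.
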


\begin{proof}
Both directions must be proved simultaneously, primarily by induction on
the accessibility relation and secondarily by structural induction on the
given derivation. 
The forward direction uses the metatheory of the focused sequent calculus
and is structured similarly to the proof 
in \cite{simmons11structural},
and the reverse direction uses the defining principles of the
natural deduction system (Theorem~\ref{thrm:meta_deteth}).
The proof appears in \verb|DetetheredCPL/Equiv.agda| in the Agda development.
\end{proof}

\section{Logic programming in constructive provability logic}\label{sec:logprog}

Proving the natural deduction system for {\bf CPL*} equivalent to a focused 
presentation of the logic is a lot of work, but the payoff is that the
focused sequent calculus can form the basis of a logic programming language
\cite{miller91uniform,andreoli92logic}. We will use an {\it extremely}
simplified example here: translating a propositional 
Horn
clause logic program with stratified negation where there are only two 
strata. In this section, atomic propositions in the first strata
will be written with the metavariable $Q$, and atomic propositions 
in the second strata will be written with the metavariable $P$. 

Atomic propositions $Q$ can appear at the head of Horn clauses 
of the form $Q \;\verb|:-|\; Q_1, \ldots, Q_n$ in the 
logic program;
atomic propositions $P$ can appear at the head of Horn clauses of 
the form
$P \;\verb|:-|\; A_1, \ldots, A_n$ in the logic program, 
where each $A_i$ is either 
an atomic proposition $P_i$, an atomic proposition $Q_i$, 
or a negated atomic proposition 
$\neg Q_i$. We will use the worlds $\beta$ and $\gamma$ (where 
$\beta \prec \gamma$) from our running example. Each first-strata 
Horn clause $Q \;\verb|:-|\; Q_1, \ldots, Q_n$ is translated into
a judgment 
${\downarrow}(Q_1 \supset \ldots \supset Q_n \supset {\uparrow}Q) [\gamma]$,
and each second-strata Horn clause 
$P \;\verb|:-|\; A_1, \ldots, A_n$ is translated into a judgment
${\downarrow}(A_1^\bullet \supset \ldots \supset A_n^\bullet \supset {\uparrow}P) [\beta]$,
where $(P_i)^\bullet = P_i$, $(Q_i)^\bullet = \Box Q_i$, and 
$(\neg Q_i)^\bullet = {\downarrow}((\Box Q_i) \supset {\uparrow}\bot)$. 
(Note that this implies a positive polarity for all atomic propositions.)
We name the context
obtained by translating our Horn
clause logic program $\Gamma$.

Searching for a proof of a proposition $P$ using bottom-up logic programming
can be characterized as a two phase proof search procedure for proofs of the term
$\inv{\Gamma, \Gamma';\cdot}{{\uparrow}P[\beta]}$, where we always maintain
the invariant that $\inv{\Gamma, \Gamma';\cdot}{{\uparrow}P[\beta]}$ is 
provable
if and only if $\inv{\Gamma;\cdot}{{\uparrow}P[\beta]}$ is provable. 

In the first phase, we only focus on hypotheses in
$\Gamma$ with the form
${\downarrow}(Q_1 \supset \ldots \supset Q_n \supset {\uparrow}Q)[\gamma]$.
Because focusing on such a proposition 
will succeed exactly when $Q_i[\gamma] \in \Gamma'$ for each of the
$Q_i$, it is always possible to determine the 
entire set of $Q_k$ that are {\it immediate consequences} of the rules
in $\Gamma$ and 
atomic propositions in $\Gamma'$. Given a sequent 
$\inv{\Gamma, \Gamma';\cdot}{{\uparrow}P[\beta]}$ that is true if and 
only if $\inv{\Gamma;\cdot}{{\uparrow}P[\beta]}$, we determine the
immediate (first-strata)
consequences $\Gamma_{\it imm}$ of $(\Gamma, \Gamma')$. By repeated
focusing steps, we can show
$\inv{\Gamma, (\Gamma' \cup \Gamma_{\it imm});\cdot}{{\uparrow}P[\beta]}$
implies $\inv{\Gamma, \Gamma';\cdot}{{\uparrow}P[\beta]}$, and we can
show the converse by the reverse substitution principle discussed
in the previous section. This in turns means that we have
a new sequent 
$\inv{\Gamma, (\Gamma' \cup \Gamma_{\it imm});\cdot}{{\uparrow}P[\beta]}$
which is true if and only if $\inv{\Gamma;\cdot}{{\uparrow}P[\beta]}$.
If $\Gamma' \not\supseteq \Gamma_{\it imm}$, we repeat the first phase.
Otherwise $\Gamma' \supseteq \Gamma_{\it imm}$, so
all the immediate consequences $Q$
of $(\Gamma,\Gamma')$ are already present in $\Gamma'$. In this case, we say we
have reached {\it saturation at $\gamma$} and continue to the second phase.

The second phase relies on the fact that, if all of the immediate consequences
$Q$ of $(\Gamma,\Gamma')$ are already present in $\Gamma'$, then
$\inv{\Gamma,\Gamma'; \cdot}{{\uparrow}Q[\gamma]}$ 
is provable if and only if $Q[\gamma] \in (\Gamma, \Gamma')$. This means
that we have an effective 
decision procedure for the provability of first-strata
propositions $Q$. Thus, the second phase proceeds the same as the first,
focusing instead on hypotheses in $\Gamma$ with the form 
${\downarrow}(A_1^\bullet \supset \ldots \supset A_n^\bullet \supset 
{\uparrow}P)[\beta]$. Focusing on such a rule will succeed exactly when:
\begin{itemize}
\item for each $A_i^\bullet = P_i$, $P_i[\beta] \in \Gamma'$,
\item for each $A_i^\bullet = \Box Q_i$, $Q_i[\gamma] \in \Gamma'$, and
\item for each $A_i^\bullet = {\downarrow}((\Box Q_i) \supset {\uparrow}\bot)$, $Q_i[\gamma] \not\in \Gamma'$.
\end{itemize}
Therefore, given that $(\Gamma, \Gamma')$ is saturated at $\gamma$,
we can also determine the entire set of second-strata
propositions that are immediate 
consequences of $(\Gamma, \Gamma')$. We proceed as before, and 
once we have reached saturation at $\beta$ as well, we can declare the
original sequent $\inv{\Gamma; \cdot}{{\uparrow}P[\beta]}$ provable
if and only if $P[\beta] \in \Gamma'$ for the final
saturated $\Gamma'$. 

\section{Axiomatic characterization}\label{sec:axiom}

%classical Godel Lob stuff, GL rule, bla.
%De Morgan stuff.
% Simpson axioms
%Minimal CPL negation stuff?

In this section, we present a sound
Hilbert-style proof theory for {\bf CPL} and {\bf CPL*}. 
The desired interpretation of $\hilcpl A$ is that it implies
that, for all converse well-founded accessibility relations and 
contexts $\Gamma$, it is the case that 
$\Gamma \ndcpl A[w]$ (in {\bf CPL}). 
Similarly, the desired interpretation of $\hilcpls A$ 
is that, for all converse well-founded accessibility
relations and contexts $\Gamma$, it is the case that 
$\Gamma \ndcpls A[w]$ (in {\bf CPL*}). We will write
$\hilcplo A$ to indicate results that hold in both {\bf CPL}
and {\bf CPL*}. 

This section only considers {\it soundness} results
for Hilbert-style reasoning; we do not claim the converse, which would be a 
{\it completeness} result.
However, when
we claim that a particular formula is not an axiom of
{\bf CPL} or {\bf CPL*}, we always can demonstrate
a particular accessibility relation, world, and instance $A$ of the said
formula such that there is no proof of $\Gamma \ndcpl A[w]$ or
$\Gamma \ndcpls A[w]$. 
For instance, 
$Q [ \alpha ] \ndcpl (\neg \Diamond Q \supset \Box \neg Q)
[ \alpha ]$
is unprovable,\footnote{$\neg A$ is the usual 
intuitionistic negation $A \supset \bot$} 
so the classically true De Morgan axiom
$\neg \Diamond A \supset \Box \neg A$ does not hold in {\bf CPL}.
Some axioms, like $\Box A \supset \Box \Box A$, only hold in general
when the accessibility relation is transitive; these are indicated.

 Both proofs and counterexamples for 
 {\bf CPL} and {\bf CPL*} can be found
in {\tt TetheredCPL/Axioms.agda} and
in {\tt DetetheredCPL/Axioms.agda} (respectively) in the 
Agda development.

\subsection{Intuitionistic modal logic}

All of the axioms of intuitionistic propositional logic are true 
in both variants of constructive provability logic, as are the
fundamental rules and axioms of intuitionistic modal logic. It is less clear
what other axioms characterize intuitionistic modal logic;
some of the axioms of Simpson's {\bf IK} hold in neither Pfenning-Davies
{\bf S4} nor in constructive provability logic.
\begin{theorem}[Intuitionistic modal logic]~\label{thrm:imla}
\begin{tabbing}
$\quad\it(MP)$~~~ \= 
$\hilcpl A \supset B$ and $\hilcpl A$ imply $\hilcpl B$, and 
$\hilcpls A \supset B$ and $\hilcpls A$ imply $\hilcpls B$\\
$\quad\it(I)$ \> $\hilcplo A \supset A$ \` \\
$\quad\it(K)$ \> $\hilcplo A \supset B \supset A$ \`  \\
$\quad\it(S)$ \> $\hilcplo (A \supset B \supset C) \supset (A \supset B) \supset A \supset C $ \`\\
$\quad\it(\bot E)$ \> $\hilcplo \bot \supset A$ \` \\
$\quad\it(NEC)$ \> $\hilcpl A$ implies $\hilcpl \Box A$, and $\hilcpls A$ implies
$\hilcpls \Box A$\\
$\quad\it(K\Box)$~ \> $\hilcplo \Box (A \supset B) \supset \Box A \supset \Box B$ \`  \\
$\quad\it(K\Diamond)$~ \> $\hilcplo \Box (A \supset B) \supset \Diamond A \supset \Diamond B$ \`\\
$\quad\it(4\Box)$~ \> $\hilcplo \Box A \supset \Box \Box A$
 \` (if the accessibility relation is transitive)\\
$\quad\it(\Diamond\bot)$ ~\> $\hilcpls \neg \Diamond \bot$ \` \\
$\quad\it(4\Diamond)$~ \> $\hilcpls \Diamond \Diamond A \supset \Diamond A$ \` (if the accessibility relation is transitive)
\end{tabbing}
$\neg \Diamond \bot$ is not an axiom of {\bf CPL}, and 
$(\Diamond A \supset \Box B) \supset \Box (A \supset B)$ 
is not an axiom of either variant.
\end{theorem}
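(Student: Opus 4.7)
The plan is to establish each positive claim by exhibiting a concrete natural deduction derivation in the appropriate system (\textbf{CPL} or \textbf{CPL*}), and to dispose of the negative claims with small counterexamples over a carefully chosen accessibility relation.

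For the purely propositional items ($\it I$, $\it K$, $\it S$, ${\bot}E$, and $\it MP$) I would just give the standard derivations using $\supset I$, $\supset E$, ${\it hyp}$, and $\bot E$; these go through in either variant since the tethering restriction on elimination rules does not bite when all judgments sit at a single world. $\it NEC$ is immediate from $\Box I$: if $\hilcpl A$ holds uniformly in $\Gamma$ and $w$, then in particular it holds for each $w'$ accessible from a given $w$, which is exactly the higher-order premise needed by $\Box I$. The $K$-axioms for $\Box$ and $\Diamond$ are derived by first applying $\Box E$ (and for $K\Diamond$ also $\Diamond E$) at the outer world $w$ to extract, as higher-order assumptions, the universally quantified facts $\forall w'.\, w \prec w' \longrightarrow \Gamma \ndcpl A \supset B[w']$ and $\forall w'.\, w \prec w' \longrightarrow \Gamma \ndcpl A[w']$; the conclusion $\Box B[w]$ or $\Diamond B[w]$ is then reached by applying $\Box I$ (resp.\ $\Diamond I$), and the required $B[w']$ is obtained by instantiating the assumptions at the same $w'$ and applying $\supset E$.

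For $4\Box$ under transitivity, I would again start by applying $\Box E$ at the outer world to obtain the assumption $\forall w'.\, w \prec w' \longrightarrow \Gamma \ndcpl A[w']$, then discharge the goal $\Box \Box A[w]$ by two nested applications of $\Box I$; the innermost obligation is $\Gamma \ndcpl A[w'']$ for $w \prec w' \prec w''$, which follows from transitivity ($w \prec w''$) and the extracted assumption. The key observation is that the strictly higher-order reading of $\Box E$ in \textbf{CPL} is what makes this extraction sound, avoiding the need to re-tether. For $\Diamond\bot$ and $4\Diamond$ in \textbf{CPL*}, I would use the de-tethered $\Diamond E$ (supplying the accessibility witness $w'' \prec^* w$, instantiated with reflexivity) to open the $\Diamond$ at an accessible $w'$ while keeping the conclusion at the original $w$; for $\neg\Diamond\bot$, the extracted hypothesis $\bot[w']$ is then propagated down to $w$ using the analogously de-tethered $\bot E$ (with $w \prec^* w'$), and $4\Diamond$ uses a second $\Diamond E$ combined with transitivity and $\Diamond I$, exactly as in the detethered elimination rule of Section~\ref{sec:previewcpls}.

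The two non-derivability claims should be justified by small counterexamples which, using the sequent calculi of Figs.~\ref{fig:teth_seq} and the focused system of Fig.~\ref{fig:foc_seq}, can be verified by case analysis on the last rule applied. For $\neg\Diamond\bot$ in \textbf{CPL}, I would pick the accessibility relation from the running example and show at $w=\gamma$, where $\gamma$ has accessible successors, that no $\supset L$/$\bot L$/$\Diamond L$ rule can bridge $\Diamond\bot[\alpha]$ to $\bot[\alpha]$, because the tethered $\bot L$ requires $\bot[\alpha]$ itself in the context; more carefully, I would pick a concrete instance like $Q[\beta] \seqcpl \bot[\alpha]$ arising along the $\Diamond L$ branch and show it is unprovable by exhausting the candidate rules. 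For $(\Diamond A \supset \Box B) \supset \Box(A \supset B)$, I would instantiate $A$ and $B$ at a carefully chosen pair of atomic propositions and a multi-world relation where the natural derivation attempt using $\Box R$ followed by $\supset R$ leaves a sequent that has no principal rule to apply; as before, the proof is by structural inspection of the focused/sequent calculus derivations.

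The main obstacle I anticipate is bookkeeping in the positive cases that mix $\Box E$ with $\Box I$: one must resist the temptation to try to conclude something at a world strictly accessible from $w$ using the \textbf{CPL} elimination rules, which is exactly the tethering constraint. The trick, used uniformly above, is to apply the elimination at the outer world first, turning the modal premise into a higher-order hypothesis that can then be freely instantiated in the scope of a subsequent $\Box I$/$\Diamond I$. The only place this trick genuinely fails is $\Diamond\bot$, $4\Diamond$, and De Morgan-style axioms, and the theorem statement reflects exactly which variant is required in each case. All the derivations and counterexamples parallel those in the Agda files cited in the statement, so the formal check is routine once the above outlines are fixed.
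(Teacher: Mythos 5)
Your derivations for the positive claims are correct and are essentially the paper's own proofs: the paper defers them to the Agda files, but they are exactly the direct derivations you describe --- eliminate the outer modality at the conclusion's world first to obtain the universally quantified higher-order hypothesis, then re-introduce with $\Box I$/$\Diamond I$, invoking transitivity for $\it 4\Box$/$\it 4\Diamond$ and the de-tethered $\bot E$/$\Diamond E$ (with a reflexive $\prec^*$ witness) for the {\bf CPL*}-only items. This is consistent with the paper's remark that, unlike Theorem~\ref{thrm:gl}, these proofs do \emph{not} proceed by induction on the accessibility relation.

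There is, however, a genuine error in your counterexample for ``$\neg\Diamond\bot$ is not an axiom of {\bf CPL}.'' You propose to exhibit an unprovable accessible-world sequent such as $Q[\beta]\seqcpl\bot[\alpha]$ arising along the $\Diamond L$ branch --- but the higher-order premise of $\Diamond L$ is a \emph{function} from proofs of $\Gamma\seqcpl\bot[w']$ to proofs of $\Gamma\seqcpl\bot[\alpha]$, so showing the accessible-world sequent unprovable makes that premise hold \emph{vacuously} and hence makes $\Diamond\bot[\alpha]\seqcpl\bot[\alpha]$ \emph{provable}. Indeed, with the empty context, or with $Q[\beta]$ in the context, $\neg\Diamond\bot[\alpha]$ is derivable in {\bf CPL}; your argument establishes provability, not its failure. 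The formula fails to be an axiom only because $\hilcpl$ quantifies over all contexts $\Gamma$: one must choose a $\Gamma$ under which $\bot$ \emph{is} provable at some accessible world but cannot be transported back to $\alpha$ by the tethered rules --- for instance $\Gamma=\bot[\beta]$ with $\alpha\prec\beta$, where $\Gamma,\Diamond\bot[\alpha]\seqcpl\bot[\beta]$ holds by $\bot L$, forcing the higher-order premise to actually produce $\bot[\alpha]$, which no rule can do. (A smaller slip: in the running example $\gamma$ has \emph{no} successors, so ``at $w=\gamma$, where $\gamma$ has accessible successors'' picks exactly the world at which $\neg\Diamond\bot$ is trivially provable.) The same care --- choosing a context that defeats the vacuous reading of the higher-order premises --- is needed for the $(\Diamond A\supset\Box B)\supset\Box(A\supset B)$ counterexample, which your sketch leaves unspecified.
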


\noindent
If the accessibility relation is transitive, {\bf CPL*} admits  the
axioms of Pfenning-Davies {\bf S4}, plus $(\Diamond \bot)$, which 
holds in {\bf IK} but not in Pfenning-Davies {\bf S4}. 
We have not been able establish the status of axiom 
$\it 4\Diamond$ in {\bf CPL}.

Simpson's thesis presents axioms characterizing other properties of 
accessibility relations besides transitivity, but all these properties 
(e.g. symmetry) are inconsistent with converse well-foundedness, so 
we ignore them here. 

\subsection{Provability logic}

Exploring the connection between constructive provability logic and provability
logic was one of the motivations of this work. The most common 
characterization
of provability logic is the $\it GL$ axiom. Since $\it GL$ can be
used to prove the $\it 4\Box$ axiom \cite{verbrugge10provability}, it is not 
surprising that this axiom requires a transitive accessibility relation.
The other standard characterization of provability logic is the L\"ob rule.
The L\"ob rule is almost always presented together with axiom 
$\it 4 \Box$ ensuring transitivity of the accessibility relation, but it is 
interesting to observe that the L\"ob rule, unlike the $\it  GL$ axiom, holds 
even without a transitive accessibility relation.

\begin{theorem}[Provability logic] ~\label{thrm:gl}
\begin{tabbing}
$\quad{\it (GL)}$ ~ \= $\hilcplo \Box (\Box A \supset A) \supset \Box A$
\` (if the accessibility relation is transitive)
\\
$\quad{\it (\mbox{\it L{\"o}b})}$ \> $\hilcpl \Box A \supset A$ implies $\hilcpl A$, and
$\hilcpls \Box A \supset A$ implies $\hilcpls A$
\end{tabbing}
\end{theorem}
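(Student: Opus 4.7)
The plan is to prove both clauses by converse well-founded induction on worlds, which is available because $\prec$ is converse well-founded throughout; the Löb rule needs nothing more, whereas GL requires a further induction nested inside the proof, and transitivity enters only there. I will describe the argument for CPL; the CPL* case is analogous (the rule $\Box E$ in CPL* is simply applied with $w'' = w$, for which $w \prec^* w$ is trivial, so the tethered argument transfers directly).

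For the Löb rule, assume $\hilcpl \Box A \supset A$; soundly read, this says $\Gamma \ndcpl (\Box A \supset A)[w]$ for every $\Gamma$ and $w$. I would prove $\Gamma \ndcpl A[w]$ for every $\Gamma$ and $w$ by converse well-founded induction on $w$, the induction hypothesis giving $\Gamma' \ndcpl A[w']$ for every $\Gamma'$ and every $w'$ with $w \prec w'$. A single use of $\Box I$ assembles these IH conclusions into $\Gamma \ndcpl \Box A[w]$, and $\supset E$ against the given $(\Box A \supset A)[w]$ produces $\Gamma \ndcpl A[w]$.

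For the GL axiom under transitivity, fix $\Gamma, w$; by $\supset I$, set $\Gamma' := \Gamma, \Box(\Box A \supset A)[w]$ and reduce to $\Gamma' \ndcpl \Box A[w]$. Apply $\Box E$ to the new hypothesis $\Box(\Box A \supset A)[w]$ with conclusion at $w$ itself, so that the remaining obligation is, under the meta-level assumption
\[
h : \forall w^*.\, w \prec w^* \longrightarrow \Gamma' \ndcpl (\Box A \supset A)[w^*],
\]
to build $\Gamma' \ndcpl \Box A[w]$. Apply $\Box I$; we must show $\Gamma' \ndcpl A[w']$ for every $w'$ with $w \prec w'$. I would do this by converse well-founded induction on $w'$: for a given such $w'$, apply $\Box I$ at $w'$, reducing the goal to showing $\Gamma' \ndcpl A[w^*]$ at each $w^*$ with $w' \prec w^*$; by transitivity $w \prec w^*$, and $w' \prec^+ w^*$, so the inner IH discharges this. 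The resulting $\Gamma' \ndcpl \Box A[w']$, combined with $h(w')$ via $\supset E$, yields $\Gamma' \ndcpl A[w']$, closing the inner induction and hence the overall argument.

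The main subtlety is resisting the impulse to apply $\Box E$ to $\Box(\Box A \supset A)[w]$ at an accessible world $w'$: in CPL the rule is tethered, and in CPL* it requires $w'' \prec^* w$, so information at $w$ cannot be transported forward to $w'$ by $\Box E$. Instead one applies $\Box E$ exactly once at $w$ and carries around the universal premise $h$, re-instantiating it at each accessible world as the inner induction demands. Transitivity is precisely what converts the inner IH's hypothesis $w \prec w^*$ into something derivable from $w \prec w'$ and $w' \prec w^*$, which is why GL genuinely requires transitivity while the Löb rule does not.
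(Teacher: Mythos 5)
Your proof is correct and follows the same strategy the paper indicates: both parts are established by (converse well-founded) induction over the accessibility relation, with the L\"ob rule falling out of a single such induction via $\Box I$ and ${\supset}E$, and $\it GL$ requiring the nested induction plus transitivity exactly as you describe. Your observation that $\Box E$ must be applied once at $w$ (carrying the universal premise $h$ forward) rather than at accessible worlds is the right way to handle the tethering.
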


Unlike the proofs of Theorem~\ref{thrm:imla}, 
both parts of Theorem~\ref{thrm:gl} are proved by induction over the 
accessibility relation.

%\subsection{De Morgan duality}
\subsection{De Morgan laws}

The interaction between negation and the modal operators is frequently
an interesting ground for exploration. In classical modal logic, 
$\Diamond A$ is just defined as $\neg \Box \neg A$, and so 
all four of the De Morgan laws --
($\Diamond\neg     A \supset \neg     \Box     A$), 
($\Box    \neg     A \supset \neg     \Diamond A$), 
($\neg    \Diamond A \supset \Box     \neg     A$), and 
($\neg    \Box     A \supset \Diamond \neg     A$) -- hold trivially.
The first three hold in Simpson's {\bf IK}, and none hold in
Pfenning-Davies {\bf S4}. In {\bf CPL*} two of the four hold, 
and in {\bf CPL} the same two hold only if we make certain assumptions
about consistency at accessible worlds.

\begin{theorem}[De Morgan laws]~
\begin{itemize}
\item In {\bf CPL*}, 
$\hilcpls \Diamond\neg     A \supset \neg     \Box     A$ and
$\hilcpls \Box    \neg     A \supset \neg     \Diamond A$.

\item In {\bf CPL}, neither 
$\Diamond\neg     A \supset \neg     \Box     A$ nor
$\Box    \neg     A \supset \neg     \Diamond A$ are axioms.

\item In {\bf CPL}, both 
$\Gamma \Rightarrow \Diamond\neg     A \supset \neg     \Box     A [ w ]$ and
$\Gamma \Rightarrow \Box    \neg     A \supset \neg     \Diamond A [ w ]$
are true if there is no $w \prec w'$ such that 
$\Gamma \Rightarrow \bot [ w' ]$.

\item $\neg    \Diamond A \supset \Box     \neg     A$ is not an axiom 
of {\bf CPL} or {\bf CPL*}.

\item $\neg    \Box     A \supset \Diamond \neg     A$ is not an axiom
of {\bf CPL} or {\bf CPL*}.

\end{itemize}
\end{theorem}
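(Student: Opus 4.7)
For the two CPL* entailments, I would give direct natural-deduction derivations. To prove $\hilcpls \Diamond \neg A \supset \neg \Box A$, apply $\supset I$ twice, reducing the goal to $\Gamma' \ndcpls \bot[w]$ with $\Gamma' = \Gamma, \Diamond \neg A[w], \Box A[w]$. Apply $\Diamond E$ to discharge $\Diamond \neg A[w]$, taking the conclusion's world to be $w$ itself so that the required $w \prec^* w$ holds by reflexivity; this leaves, under a hypothesis $\neg A[w']$ for some $w'$ with $w \prec w'$, the goal $\bot[w]$. Apply $\Box E$ on $\Box A[w]$ (again with $w$ as the conclusion's world) to receive $g : \forall w''.\, w \prec w'' \longrightarrow \Gamma' \ndcpls A[w'']$. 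Then $\supset E$ between $\neg A[w']$ and $g(w')$ yields $\Gamma' \ndcpls \bot[w']$, and the de-tethered $\bot E$ instantiated with $w \prec^* w'$ (which follows from $w \prec w'$) converts this to $\Gamma' \ndcpls \bot[w]$. The case $\hilcpls \Box \neg A \supset \neg \Diamond A$ is symmetric, with the roles of the two modalities swapped.

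For the CPL result under the consistency hypothesis (the heart of the theorem), I would work in the sequent calculus. After $\supset R$ twice the goal is $\Gamma' \seqcpl \bot[w]$. Apply $\Diamond L$ on $\Diamond \neg A[w]$, reducing to the obligation to produce, for every $w'$ with $w \prec w'$ and every meta-level hypothesis $h : \Gamma' \seqcpl \neg A[w']$, a derivation of $\Gamma' \seqcpl \bot[w]$. Within that obligation, apply $\Box L$ on $\Box A[w]$ to obtain $g : \forall w''.\, w \prec w'' \longrightarrow \Gamma' \seqcpl A[w'']$. Combine $g(w')$ with $h$ by the sequent-calculus analogue of modus ponens (an application of $\supset L$ together with the cut admissibility of Theorem~\ref{thrm:meta_teth_seq}) to obtain $\Gamma' \seqcpl \bot[w']$. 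Now the extras $\Diamond \neg A[w]$ and $\Box A[w]$ live at $w$, and converse well-foundedness combined with $w \prec w'$ prevents $w$ from being $\prec^*$-reachable from $w'$; the indexed inclusion $\Gamma' \subseteq_{w'} \Gamma$ therefore holds, and one application of generalized weakening yields $\Gamma \seqcpl \bot[w']$. This contradicts the consistency hypothesis, so the meta-level context is absurd and the desired $\Gamma' \seqcpl \bot[w]$ follows by ex falso. The companion case for $\Box \neg A \supset \neg \Diamond A$ is analogous.

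The three non-axiom claims are handled by concrete counterexamples verified by inversion on the sequent calculus. For the non-axiom status of $\Diamond \neg A \supset \neg \Box A$ and $\Box \neg A \supset \neg \Diamond A$ in CPL, I would take the accessibility relation $\alpha \prec \beta$, context $\Gamma = \{\bot[\beta]\}$, and $A = Q$ at world $\alpha$: after the $\supset R$ reductions the target becomes $\bot[\alpha]$ in an extended context whose only judgments at $\alpha$ are the two modalities, and a case analysis on the last rule of any hypothetical derivation shows that no rule can bridge the tethered gap between $\bot[\beta]$ and $\bot[\alpha]$. For the non-axiom status of $\neg \Diamond A \supset \Box \neg A$ and $\neg \Box A \supset \Diamond \neg A$ in both CPL and CPL*, the same accessibility relation with $\Gamma = \emptyset$, $A = Q$, $w = \alpha$ suffices: the reductions lead to subgoals of the form $\neg \Diamond Q[\alpha], Q[\beta] \seqcpl \bot[\beta]$ (and the mirror-image subgoal for the other formula), where nothing at world $\beta$ has a shape any left or elimination rule can exploit, and the de-tethered CPL* rules do not help because no $\Diamond$, $\Box$, or $\bot$ lives at a world $\prec^*$-above $\beta$.

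The main obstacle is the CPL conditional: one must recognize that the tethered sequent rules together with cut can only manufacture $\bot$ at the accessible world $w'$, never directly at $w$, and that the bidirectional character of $\subseteq_{w'}$ (specifically the direction $\Gamma' \subseteq_{w'} \Gamma$, which rests on converse well-foundedness) is precisely what is needed to transport the derived $\Gamma' \seqcpl \bot[w']$ down to a $\Gamma \seqcpl \bot[w']$ that the consistency assumption forbids.
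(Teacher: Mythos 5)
Your proposal is correct in all five parts. Note that the paper gives no inline proof of this theorem at all---it defers entirely to the Agda files \texttt{TetheredCPL/Axioms.agda} and \texttt{DetetheredCPL/Axioms.agda}---so the comparison here is against the machinery the paper sets up rather than against a written argument. Your two \textbf{CPL*} derivations are the intended ones: the de-tethered $\bot E$, instantiated with $w \prec^* w'$ obtained from $w \prec w'$, is exactly what transports the contradiction manufactured at the accessible world back down to $w$, and that is precisely the step tethered \textbf{CPL} lacks. Your treatment of the conditional \textbf{CPL} case also isolates the real crux: the derived $\Gamma' \seqcpl \bot[w']$ can be stripped of the hypotheses $\Diamond\neg A[w]$ and $\Box A[w]$ because converse well-foundedness forbids $w' \prec^* w$, so $\Gamma' \subseteq_{w'} \Gamma$ holds, generalized weakening (in its ``shrinking'' direction) produces the $\Gamma \seqcpl \bot[w']$ that the consistency hypothesis forbids, and the remaining obligation is discharged by meta-level ex falso.

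One place where your wording is looser than the argument it stands for is the refutations. In $\bot[\beta], \Diamond\neg Q[\alpha], \Box Q[\alpha] \seqcpl \bot[\alpha]$ it is not true that no rule applies: $\Diamond L$ and $\Box L$ on the hypotheses at $\alpha$ are both applicable, and their higher-order premises are non-vacuous precisely because $\bot[\beta]$ makes $\neg Q[\beta]$ and $Q[\beta]$ provable; what saves you is that feeding those premises the available derivations regenerates a structurally smaller derivation of the same sequent, so the non-provability claim needs an induction on derivations, not a single inversion. Similarly, the refutation of $\neg\Box A \supset \Diamond\neg A$ must also dispose of the branch where $\supset L$ is applied to the negated hypothesis while the goal is still at $\alpha$, and the \textbf{CPL*} refutations must rule out eliminations whose major premise is itself \emph{derived} (e.g.\ $\supset E$ on $\neg\Diamond Q[\alpha]$ against a derivable $\Diamond Q[\alpha]$ does yield $\bot[\alpha]$---harmless only because $\beta \not\prec^* \alpha$ blocks transporting that $\bot$ to $\beta$). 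Finally, a cosmetic difference: the paper's recorded counterexample for $\neg\Diamond A \supset \Box\neg A$ uses $\Gamma = \{Q[\alpha]\}$ rather than the empty context, but your choice works equally well.
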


% \noindent
% The third bullet point, in particular, says something interesting about
% {\bf CPL}: consistency in the tethered system is completely independent
% from one world to another, and so a contradiction at some accessible world
% cannot be used to derive a c axiomatic ontradiction at the current world. 

\section{Conclusion}\label{sec:conc}

In this article, we have 
given natural deduction and sequent calculus presentations for
two variants of constructive provability
logic, a modal logic with reflection over both accessibility and
provability. The standard judgmental principles of all 
four deductive systems were presented and formalized in the Agda
proof assistant (with some caveats described in Section~\ref{sec:note}). 
Furthermore, through a focused sequent calculus
presentation, we produced a sketch of how constructive provability logic 
can be used as a intuitionistic and proof-theoretic justification for 
stratified negation 
in logic programming. Finally, as customary in
most works on provability logic, we gave a axiomatic
characterization of constructive provability logic and 
showed that most of the standard axioms of provability logic are sound 
with respect to our proof theoretic presentation.

\subsection{Related work}

There has been a substantial amount of research on provability logic
throughout the years. The early research on the topic focused on
axiomatic presentations of provability logic and its implications for
the foundations of mathematics. More recently, there has been interest
in the proof theoretic aspects of provability logic, mostly following
the cut elimination result of Valentini 
\shortcite{valentini83modal}.
However, most research in provability logic focuses on classical
logic (a detailed survey is given in
\cite{artemov04provability}). 
Intuitionistic formulations of provability logic have
historically been much less explored, with some notable
exceptions. For a more detailed historical account of intuitionistic provability
logic, as well as a development of a provability logic for intuitionistic
arithmetic, see \cite{iemhoff01provability}. 

Our line of work departs substantially from previous presentations,
even from intuitionistic variants of provability logic. 
Natural deduction systems for provability
logic are also not very common in the literature, given the historical
bias towards axiomatic systems. 
Furthermore, most existing
sequent calculi for provability logic are classical, and do not make
use of explicit worlds nor reflection, which arise as a natural way
of representing provability logic through the judgmental methodology, 
and thus are substantially different from our own. A focused sequent
calculus for provability logic is also, to the best of our knowledge,
unheard of. % This is not surprising since the issue of proof search for
% provability logic has not received much attention.

\subsection{Future work}

This work introduces propositional constructive logic programming as 
a modal logic. The only major shortcoming to our treatment of {\bf CPL}
and {\bf CPL*} as modal logics is that we do not know how to formulate
or prove the completeness of our system with respect to a Hilbert-style 
presentation. It is not at all clear
how this deficiency can be overcome. It may require the introduction of
a notion of validity similar to the validity considered
by Pfenning and Davies \shortcite{pfenning01judgmental}, and it also may require
more fundamental changes to the logic, such as making the computational
content of the higher order rule formulations more explicit.

In contrast to our relatively thorough investigation of {\bf CPL} and
{\bf CPL*} as modal logics, we have only barely
scratched the surface of understanding
the possible applications of constructive provability logic as the
basis for proof search and logic programming. We ultimately wish to use
constructive provability logic to justify the L10 logic programming
language, a rich forward-chaining language that uses worlds to enable
both distributed logic programming and locally stratified negation
\cite{simmons11distributed}. To do so, we require a
satisfactory treatment of first-order quantification in constructive 
provability logic, as the account in this paper was entirely propositional. 
In addition, it is likely that a hybrid modal operator $A @ w$ will 
prove to be more useful than the traditional modal operators
$\Diamond A$ and $\Box A$, but this
is a minor change from a proof-theoretic perspective. 

Horn-clause logic programming is only the simplest logic programming 
application of constructive provability logic; the focused presentation of
{\bf CPL*} immediately opens the door to the principled addition of
stratified negation to more interesting logic programming
languages, such as higher-order logic programming languages like
$\lambda$Prolog and Twelf. We also believe that constructive
provability logic with nominal quantification could be presented 
as a generalization of the Bedwyr language, which synthesizes 
model checking and logic programming \cite{baelde07bedwyr}.

Finally, provability logic is 
quite important in other areas of computer science,
particularly as the basis for the approximation or delay 
modality $\rhd$ used to model programming languages 
\cite{nakano00modality,richards10approximation}. 
We hope to better understand whether and how constructive provability logic
can relate to this line of work.

\subsection*{Acknowledgements}

Michael Ashley-Rollman, William Lovas, Frank Pfenning, Andr{\'e} Platzer,
and the reviewers and participants of the 2011 IMLA workshop provided
valuable feedback and corrections to earlier versions and drafts of
this work.

This work was supported by an X10 Innovation Award from IBM, a 
National Science Foundation Graduate Research Fellowship for the first author, 
and by Funda\c{c}\~{a}o para a Ci\^{e}ncia e a Tecnologia
(Portuguese Foundation for Science and Technology)
through the Carnegie Mellon Portugal Program under Grants NGN-44
and SFRH / BD / 33763 / 2009.

\bibliographystyle{acmtrans}
\bibliography{ref}

\end{document}